\newif\ifANON
\title{The Power of Recursive Embeddings for $\ell_p$ Metrics} 
    \author{Anonymous}
    \date{}
    \author{Robert Krauthgamer%
      \thanks{Work partially supported by the Israel Science Foundation grant \#1336/23,
        by the Israeli Council for Higher Education (CHE) via the Weizmann Data Science Research Center,
        and by a research grant from the Estate of Harry Schutzman.
        Email: \texttt{robert.krauthgamer@weizmann.ac.il}
      } 
      \qquad 
    Nir Petruschka\thanks{
        Email: \texttt{nir.petruschka@weizmann.ac.il}
      } 
      \qquad
    Shay Sapir\thanks{
        Email: \texttt{shay.sapir@weizmann.ac.il}
      } 
    \\  Weizmann Institute of Science
    }
\begin{document}

\maketitle

\begin{abstract}
Metric embedding is a powerful tool used extensively in mathematics and computer science.
We devise a new method of using metric embeddings recursively,
which turns out to be particularly effective in $\ell_p$ spaces, $p>2$,
yielding state-of-the-art results for Lipschitz decomposition,
for Nearest Neighbor Search, and for embedding into $\ell_2$.
In a nutshell, our method composes metric embeddings
by viewing them as reductions between problems,
and thereby obtains a new reduction that is substantially more effective
than the known reduction that employs a single embedding.
We in fact apply this method recursively, oftentimes using double recursion,
which further amplifies the gap from a single embedding.
\end{abstract}

\section{Introduction}

Metric embeddings represent points in one metric space 
using another metric space, often one that is simpler or easier,
while preserving pairwise distances within some distortion bounds.
This mathematical tool is very powerful at transferring properties
between the two metric spaces,
and is thus used extensively in many areas of mathematics and computer science. 
Its huge impact over the past decades is easily demonstrated
by fundamental results, such as John's ellipsoid theorem~\cite{John48},
the Johnson-Lindenstrauss (JL) Lemma~\cite{JL84},
Bourgain's embedding~\cite{Bourgain85},
and probabilistic tree embedding~\cite{Bartal96}. %

We devise a new method of using metric embeddings \emph{recursively},
in a manner that is particularly effective for $\ell_p$ spaces, $p>2$.
Our method is based on the well-known approach of embedding $\ell_p$ into $\ell_2$
(via the so-called Mazur map),
but leverages a new form of recursion that goes through intermediate spaces,
to beat a direct embedding from $\ell_p$ into $\ell_2$.

Our method is inspired by the concept of reduction between (computational) problems,
which is fundamental in computer science and has been used extensively
to design algorithms and/or to prove conditional hardness.
Many known reductions use metric embeddings in a straightforward manner,
without harnessing the full power of reductions,
which allow further manipulation,
like employing multiple embeddings and taking the majority (or best) solution.%
\footnote{This is perhaps analogous to the difference between Cook reductions and Karp reductions.
  The former allows the use of a subroutine that solves the said problem,
while the latter applies only a single transformation on the input,
and is thereby restricted to a single subroutine call.
}
To see this gap between embeddings and reductions,
consider a composition of multiple embeddings,
which yields overall an embedding from the first metric space to the last one.
While going through intermediate metric spaces may simplify the exposition,
it can only restrict the overall embedding.
In contrast, composing metric embeddings by way of reductions,
can create new reductions that are substantially richer than any single direct embedding.
Our method actually composes reductions \emph{recursively},
which makes this gap even more pronounced. 
We emphasize that the application of this method is problem-specific,
unlike a metric embedding which is very general and thus applies to many problems at once.
On the flip side, tailoring our recursive method to a specific problem
opens the door to embeddings that are non-oblivious to the problem/data, 
which is reminiscent of data-dependent space partitioning
used in recent nearest neighbor search (NNS) algorithms~\cite{ANNRW18a, ANNRW18b,KNT21}. 
To the best of our knowledge, this recursive method is new,
i.e., related to but different from variants that have been used in prior work.

Our method yields several state-of-the-art results:
(i) Lipschitz decomposition for finite subsets of $\ell_p$ spaces, $p>2$; 
(ii) consequently, also Lipschitz decomposition for $\ell_\infty^d$;
and 
(iii) algorithms for NNS in $\ell_p$ spaces, $p>2$. 
After obtaining these results,
we noticed the online posting of parallel work~\cite{NR25},
and realized that our method can also
(iv) improve some of its results about embedding into $\ell_2$.

\subsection{Lipschitz Decomposition}

A standard approach in many metric embeddings and algorithms
is to partition a metric space into low-diameter (so-called) clusters,
and the following probabilistic variant is commonly used and highly studied
(sometimes called a separating decomposition).

\begin{definition}[Lipschitz decomposition \cite{Bartal96}]
Let $(\M,d_\M)$ be a metric space.
A distribution $\D$ over partitions of $\M$ is called
a \emph{$(\beta,\Delta)$-Lipschitz decomposition}
if
\begin{itemize} \compactify
\item
  for every partition $P\in \supp(\D)$, all clusters $C\in P$
  satisfy $\diam(C) \leq \Delta$;
  and
\item
  for every $x,y\in \M$,
  \[
    \Pr_{P\sim \D}[P(x)\neq P(y)]\leq \beta \tfrac{d_\M(x,y)}{\Delta},
  \]
  where $P(z)$ denotes the cluster of $P$ containing $z\in \M$
  and $\diam(C) \coloneqq \sup_{x,y\in C}d_\M(x,y)$.
\end{itemize}
\end{definition}
Our first use of recursive embedding yields the following theorem, 
whose proof appears in \Cref{sec:iterative_lipschitz_decomp}.
\begin{theorem}\label{thm:Optimal-Lipschitz-Decompositions-in-lp}
Let $p \geq 2$ and $d \geq 1$.
Then for every \( n \)-point metric \( \C \subset \ell_{p}^{d} \) and $\Delta>0$,
there exists an 
$(O(p^4\sqrt{\min\{\log {n}, d\}}), \Delta)$-Lipschitz decomposition.
\end{theorem}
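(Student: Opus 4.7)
The plan is to prove the theorem by induction on $p$, reducing the Lipschitz-decomposition problem in $\ell_p^d$ to one in $\ell_{p'}^d$ for some smaller $p'$, and ultimately bottoming out at $p=2$, where the classical Charikar--Chekuri--Goel--Guha--Plotkin style decomposition gives $(O(\sqrt{\min\{\log n, d\}}), \Delta)$ at every scale $\Delta$. The recursion is on $p$ (not on $n$), so the ``input size'' parameter $n$ is preserved across levels.

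The core step is a single reduction from $\ell_p^d$ to $\ell_{p'}^d$. Given a set $C \subset \ell_p^d$ and a target scale $\Delta$, I would first draw a \emph{coarse} partition of $C$ at an auxiliary $\ell_p$-scale $\Delta^\star \gg \Delta$, chosen so that each coarse cluster sits inside an $\ell_p$-ball of radius $O(\Delta^\star)$. (This coarse partition is itself a Lipschitz decomposition in $\ell_p^d$, at the larger scale $\Delta^\star$, and is obtained recursively --- this is the double recursion alluded to in the introduction.) Next, inside each coarse cluster I would apply a coordinate-wise Mazur-type map $\phi(x)_i = \mathrm{sgn}(x_i)\,|x_i|^{p/p'}$ (suitably normalized by $\Delta^\star$), which converts $\ell_p$-distances into $\ell_{p'}$-distances. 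On a ball of radius $O(\Delta^\star)$, this map is approximately Lipschitz with a quantitative constant depending on $p$, $p'$, and $\Delta^\star$; points originally within $\ell_p$-distance $t$ map to points within $\ell_{p'}$-distance $\approx (\Delta^\star)^{1-p/p'} t^{p/p'}$ (up to constants). Then I would invoke the inductively-given decomposition in $\ell_{p'}^d$ at the appropriate pulled-back scale and pull the partition back to $\ell_p^d$.

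The analysis has two parts. Diameter: every final cluster lies inside a coarse cluster of $\ell_p$-diameter $\le \Delta^\star$, and inside each coarse cluster the pulled-back $\ell_{p'}$-partition has $\ell_{p'}$-diameter at most the chosen sub-scale, which after reversing the (locally Lipschitz) $\phi$ gives $\ell_p$-diameter $\le \Delta$ for the right choice of sub-scale. Separation: a pair $x,y\in C$ at $\ell_p$-distance $t$ is separated either (i) by the coarse partition, contributing $O(\alpha_p)\tfrac{t}{\Delta^\star}$, where $\alpha_p$ is the decomposition parameter in $\ell_p^d$ at the coarse scale, or (ii) by the inner $\ell_{p'}$-partition, contributing $O(\alpha_{p'})$ times the ratio of $\ell_{p'}$-distance after $\phi$ to the $\ell_{p'}$-scale. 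I would optimize $\Delta^\star$ to balance the two contributions; this balance is what forces the multiplicative loss per level. With $p' = p/2$ the recursion has $O(\log p)$ levels, and a careful accounting should give $O(p^c)$ loss per level for some small constant $c$, yielding the claimed $O(p^4\sqrt{\min\{\log n, d\}})$ overall.

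The main obstacle, I expect, is the separation analysis at the coarse scale: near the boundary of a coarse cluster a pair $(x,y)$ is charged $t/\Delta^\star$ rather than $t/\Delta$, so $\Delta^\star$ must be chosen small enough that this is absorbed into the target bound, yet large enough that the Mazur map's H\"older blow-up inside coarse clusters remains controlled. Getting the exponents of $p$ tight in both contributions simultaneously is what makes the constant $4$ (rather than a larger power) in $p^4$ delicate, and is the step where I would spend most of the care.
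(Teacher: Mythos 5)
Your plan follows the same route as the paper: a coarse partition at an auxiliary scale, a Mazur map into $\ell_{p/2}$ inside each coarse cluster, an inner decomposition there, a pull-back, balancing of the two scales, an outer recursion halving $p$ down to $2$, and the base case $\beta^{*}_{n}(\ell_2^d)=O(\sqrt{\min\{\log n, d\}})$ from \cite{CCGGP98} and the JL lemma. However, two steps that carry the real content are missing or wrong as stated. First, the inner recursion is circular as you describe it: you say the coarse partition of $\mathcal{C}\subset\ell_p^d$ at scale $\Delta^\star$ ``is obtained recursively,'' but the recursion on $p$ cannot supply it (it is a decomposition of the \emph{same} space $\ell_p^d$, whereas the outer recursion only provides decompositions of the Mazur images in $\ell_{p'}^d$), and a recursion over scales alone never terminates. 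You must ground this bootstrap: either start from the known bound $\beta_0=O(\min\{d,\log n\})$ of \cite{Bartal96,CCGGP98} and iterate the improvement step, showing the parameter converges geometrically to within a constant of the fixpoint $16\,\beta^{*}_{n}(\ell_{p/2})$ after $O(\log(\log p\cdot\log\min\{d,\log n\}))$ iterations (as the paper does), or argue directly from finiteness of $\beta^{*}(\mathcal{C})$ that it satisfies $\beta^{*}(\mathcal{C})\leq 4\sqrt{\beta^{*}_{n}(\ell_{p/2})\,\beta^{*}(\mathcal{C})}$ and hence $\beta^{*}(\mathcal{C})\leq 16\,\beta^{*}_{n}(\ell_{p/2})$. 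As written there is no base for the within-$p$ recursion.

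Second, your accounting ``$O(p^c)$ loss per level over $O(\log p)$ levels yields $p^4$'' does not add up: a loss that is a fixed power of the current $p$ at each halving step compounds to $p^{\Theta(\log p)}$, which is superpolynomial in $p$. The claimed exponent comes from the per-level loss being an \emph{absolute constant}: with $p'=p/2$ the Mazur-map factors $p/p'$ and $2^{p/p'-1}$ are both $2$, the balanced fixpoint loses a factor of about $16$ per halving, and $16^{\log_2 p}=p^4$. Establishing that the per-level factor is $O(1)$ rather than $p^{\Theta(1)}$ is precisely what the choice $p'=p/2$ buys (versus $p'=2$, where these factors are $\exp(p)$), and it is the step your sketch does not verify. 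A further minor slip: in your ``main obstacle'' paragraph the tension is reversed --- the coarse-partition charge $\alpha_p\, t/\Delta^\star$ is absorbed by taking $\Delta^\star$ \emph{large}, while controlling the Mazur H\"older blow-up (so that the pulled-back clusters have diameter $\leq\Delta$ without shrinking the inner scale too much) forces $\Delta^\star$ to be \emph{small}; balancing these two opposite pressures is what fixes the auxiliary scale and produces the constant per-level loss.
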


Typically, $\Delta$ is not known in advance
or one needs multiple values of $\Delta$ (e.g., every power of $2$).
We naturally  seek the smallest possible $\beta$ in this setting,
and thus define the (optimal) \emph{decomposition parameter} of a metric space $(\M,\rho)$ as
\[
  \beta^*(\M) \coloneqq \inf_{\beta\ge1}
  \Big\{\beta :\
    \text{$\forall \Delta>0$, every finite $\M' \subseteq \M$
      admits a $(\beta, \Delta)$-Lipschitz decomposition} 
  \Big\} , 
\]
and further define 
$\beta^*_n(\M) \coloneqq
  \sup \big\{ \beta^*(\M') :\ \M'\subseteq \M, |\M'| \leq n \big\} $.
The following two corollaries of \Cref{thm:Optimal-Lipschitz-Decompositions-in-lp} 
bound these quantities and delineate the asymptotic dependence on $n$ and on $d$.

\begin{corollary}
\label{cor:Optimal-finite-l_p_Lipschitz-Decomposition}
For every $p\in[2,\infty)$ and $n \geq 1$, we have
$\beta^{*}_{n}(\ell_{p})=O(p^4\sqrt{\log{n}})$.
\end{corollary}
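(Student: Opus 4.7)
The plan is to reduce the corollary to \Cref{thm:Optimal-Lipschitz-Decompositions-in-lp} by a standard truncation argument, since that theorem already settles every finite subset of $\ell_p^d$. Fix an arbitrary finite $S \subseteq \ell_p$ with $|S|=n$; it suffices to prove $\beta^*(S) = O(p^4\sqrt{\log n})$, as $\beta^*_n(\ell_p)$ is by definition the supremum of these over $|S|\le n$. Let $\delta := \min_{x\ne y\in S}\|x-y\|_p > 0$, and fix $\epsilon\in(0,1)$. Since each point of $S$ has $\ell_p$-summable coordinates, one can choose a dimension $d=d(\epsilon,S)$ large enough that truncating every $x\in S$ to its first $d$ coordinates produces $S'' \subseteq \ell_p^d$ with
\[
  (1-\epsilon)\|x-y\|_p \;\le\; \|x''-y''\|_p \;\le\; \|x-y\|_p
  \qquad \text{for all } x,y\in S.
\]
The upper bound is immediate; the lower bound uses that the additive error $\|x-y\|_p^p - \|x''-y''\|_p^p$ is a tail sum $\sum_{k>d}|x^{(k)}-y^{(k)}|^p$ that can be made smaller than $\bigl(1-(1-\epsilon)^p\bigr)\delta^p$ uniformly over the $\binom{n}{2}$ pairs, once $d$ is sufficiently large.

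Next, I would apply \Cref{thm:Optimal-Lipschitz-Decompositions-in-lp} to $S''$ at scale $\Delta' := (1-\epsilon)\Delta$, producing a $(\beta_0, \Delta')$-Lipschitz decomposition of $S''$ with $\beta_0 = O\bigl(p^4\sqrt{\min\{\log n, d\}}\bigr) = O(p^4\sqrt{\log n})$. Via the natural bijection $S\leftrightarrow S''$, each such partition of $S''$ pulls back to a partition of $S$. A cluster of $\|\cdot\|_p$-diameter at most $\Delta'$ in $S''$ becomes, using $\|x-y\|_p \le \|x''-y''\|_p/(1-\epsilon)$, a cluster of diameter at most $\Delta$ in $S$; and the separation probability obeys
\[
  \Pr\bigl[\text{$x,y$ separated in $S$}\bigr]
  \;=\; \Pr\bigl[\text{$x'',y''$ separated in $S''$}\bigr]
  \;\le\; \frac{\beta_0\,\|x''-y''\|_p}{\Delta'}
  \;\le\; \frac{\beta_0}{1-\epsilon}\cdot\frac{\|x-y\|_p}{\Delta},
\]
where the last inequality uses $\|x''-y''\|_p\le\|x-y\|_p$. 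Hence $S$ admits a $(\beta_0/(1-\epsilon),\Delta)$-Lipschitz decomposition for every $\Delta>0$, and since $\beta^*(S)$ is defined as an infimum, sending $\epsilon\to 0$ yields $\beta^*(S)\le\beta_0 = O(p^4\sqrt{\log n})$, as desired.

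The only subtlety---and thus the sole obstacle I anticipate---is arranging the truncation to be genuinely $(1+O(\epsilon))$-bi-Lipschitz rather than merely having small additive error; this is where the strict positivity of $\delta$ (a consequence of $S$ being finite) is essential. Everything afterwards is routine bookkeeping on how the scale parameter $\Delta$ and the separation probability transform under the near-isometric embedding $S\hookrightarrow S''$.
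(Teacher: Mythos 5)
Your argument is correct, but it takes a different (more elementary) route than the paper. The paper's proof is a one-liner: it invokes the result cited as \cite{Ball90} that every finite subset of $\ell_p$ embeds \emph{isometrically} into $\ell_p^d$ for some finite $d$, and then applies \Cref{thm:Optimal-Lipschitz-Decompositions-in-lp} directly, with no distortion bookkeeping at all. You instead build the finite-dimensional copy by hand, truncating coordinates to get a $(1-\epsilon)$-approximate embedding (valid precisely because $\ell_p$ is a sequence space, so each point has vanishing tails, and because $S$ is finite so the minimal distance $\delta$ is positive), and then you absorb the $\epsilon$ loss by rescaling $\Delta$ and using that $\beta^*$ is an infimum --- in fact taking $\epsilon\to 0$ is not even needed, since fixing $\epsilon=1/2$ already gives $\beta^*(S)\le 2\beta_0=O(p^4\sqrt{\log n})$. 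The trade-off: the paper's route is shorter but leans on an external isometric-embedding theorem (which is the tool one would genuinely need for $L_p$ function spaces, where truncation is unavailable); yours is self-contained and only uses near-isometry, at the cost of the pull-back bookkeeping, which you carried out correctly (diameter scales by $1/(1-\epsilon)$, separation probability by the same factor, and the bound from \Cref{thm:Optimal-Lipschitz-Decompositions-in-lp} is $O(p^4\sqrt{\log n})$ uniformly in $d$, so nothing blows up as $d=d(\epsilon,S)$ grows). The only cosmetic omission is the trivial case $n=1$, where $\delta$ is undefined but the statement is vacuous.
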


\begin{proof}
It follows directly from \Cref{thm:Optimal-Lipschitz-Decompositions-in-lp} and the result from \cite{Ball90},
that every finite set $X \subset \ell_p$ embeds isometrically into $\ell_p^d$ for some $d$.
\end{proof}

This result significantly improves the previous bound
$\beta^*_n(\ell_p) = O(\log^{1-1/p} n)$ from~\cite{KP25}, 
and fully resolves \cite[Question 1]{Naor17} (see also \cite[Question 83]{Naor24}),
which asked for an $O_p(\sqrt{\log n})$ bound.
(Throughout, the notation $O_\alpha(\cdot)$ hides a factor that depends only on $\alpha$.)
In parallel to our work, a slightly weaker bound 
$\beta^*_n(\ell_p) \leq O(2^p\sqrt{\log n})$
was obtained in~\cite{NR25}.
Both our improvement and that of~\cite{NR25}
rely on the technique developed in~\cite{KP25},
and essentially apply it iteratively/recursively instead of once,
and ours actually applies double recursion.

\begin{corollary}
\label{cor:Optimal-l_p_d_Lipschitz-Decomposition}
For every $p\in[2,\infty]$ and $d \geq 1$, we have
    $\beta^*(\ell_{p}^{d})=O( (\min\{p,\log{d}\})^4 \cdot \sqrt{d})$.
\end{corollary}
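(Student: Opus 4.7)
The plan is to deduce the bound from \Cref{thm:Optimal-Lipschitz-Decompositions-in-lp} by splitting into two regimes corresponding to the two terms of the outer $\min$.

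First, for finite $p$ with $p \leq \log d$, I would directly apply \Cref{thm:Optimal-Lipschitz-Decompositions-in-lp} to every finite $\M'\subseteq \ell_p^d$. The theorem supplies a $(\beta,\Delta)$-Lipschitz decomposition with $\beta = O(p^4\sqrt{\min\{\log|\M'|,d\}}) \leq O(p^4\sqrt d)$, uniformly in $|\M'|$. Taking the supremum over finite subsets yields $\beta^*(\ell_p^d) = O(p^4\sqrt d)$, which matches $O((\min\{p,\log d\})^4\sqrt d)$ in this regime.

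Second, when $p > \log d$ (including $p=\infty$), I would exploit the classical norm comparison on $\mathbb{R}^d$: for every $q \leq p$ and $x\in\mathbb{R}^d$,
\[
    \|x\|_p \;\leq\; \|x\|_q \;\leq\; d^{1/q - 1/p}\|x\|_p .
\]
Setting $q := \max\{2,\log d\}$ gives $q\ge 2$ and $d^{1/q-1/p} \leq d^{1/q} = O(1)$ (when $d$ is a small constant one just uses $\sqrt d = O(1)$); hence the identity map $\ell_p^d \to \ell_q^d$ is bi-Lipschitz with constant distortion. Given any finite $\M'\subseteq \ell_p^d$ and scale $\Delta>0$, I would apply \Cref{thm:Optimal-Lipschitz-Decompositions-in-lp} to $\M'$ viewed inside $\ell_q^d$, obtaining a $(\beta,\Delta)$-Lipschitz decomposition (with respect to $\ell_q$) where $\beta = O(q^4\sqrt d) = O((\log d)^4\sqrt d)$. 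Transferring back to $\ell_p^d$, cluster diameters only decrease (since $\|\cdot\|_p \leq \|\cdot\|_q$), while the separation probability is inflated by at most the $O(1)$ distortion. This yields an $(O((\log d)^4\sqrt d),\Delta)$-Lipschitz decomposition of $\M'$ in $\ell_p^d$, so $\beta^*(\ell_p^d) = O((\log d)^4\sqrt d)$.

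Combining the two regimes produces $\beta^*(\ell_p^d) = O((\min\{p,\log d\})^4\sqrt d)$, with the trivial case $d=1$ interpreted as the lower bound $\beta^*\ge 1$. I do not expect any genuine obstacle: the single routine check is that a distortion-$D$ identity map between two norms on the same point set transforms a $(\beta,\Delta)$-decomposition into an $(O(D\beta),\Delta)$-decomposition, which is immediate from the definition. The corollary is essentially \Cref{thm:Optimal-Lipschitz-Decompositions-in-lp} combined with the standard $\ell_p$-to-$\ell_{\log d}$ truncation trick, which caps the exponent at $\log d$ at the cost of an $O(1)$ factor.
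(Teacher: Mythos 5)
Your proof is correct and follows essentially the same route as the paper: apply \Cref{thm:Optimal-Lipschitz-Decompositions-in-lp} directly when $p\leq \log d$, and for larger $p$ (including $p=\infty$) reduce to $\ell_{\log d}^{d}$ via the Hölder-type norm comparison, which has $O(1)$ distortion and transfers the decomposition with only a constant-factor loss. The only difference is that you spell out the routine transfer argument and the small-$d$ edge case, which the paper leaves implicit.
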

\begin{proof}
For $p \leq \log d$,
it follows from \Cref{thm:Optimal-Lipschitz-Decompositions-in-lp}. 
For larger $p$, use Hölder's inequality to reduce the problem
from $\ell_{p}^{d}$ to $\ell_{\log d}^{d}$ with $O(1)$ distortion.%
\footnote{A metric space $(\M, d_\M)$ embeds into a metric space $(\mathcal{N}, d_\mathcal{N})$ with distortion $D \geq 1$ iff there exists \( s > 0 \) and a function $f: \M \to \mathcal{N}$ such that for all \( x, y \in \M\), $\frac{s}{D} \cdot d_\M(x, y) \leq d_\mathcal{N}(f(x), f(y)) \leq s \cdot d_\M(x, y)$.}
\end{proof}

\Cref{cor:Optimal-l_p_d_Lipschitz-Decomposition} is slightly weaker than
Naor's main result in~\cite{Naor17} (see also~\cite{Naor24}). Naor showed that $\beta^*(\ell_p^d) = O(\sqrt{\min\{p, \log d\} \cdot d})$
for all $p \in [2, \infty]$,
which nearly matches the $\Omega(\sqrt{d})$ lower bound that follows from~\cite{CCGGP98}. Our proof is fundamentally different from, and arguably simpler than, Naor’s proof,
which relies on a deep understanding of the geometry of $\ell_p^d$ spaces.
One may hope that our proof could resolve the exact asymptotics of $\beta^*(\ell_\infty^d)$,
perhaps by simply optimizing the constants in our recursion that yield
the $p^4$ factor in \Cref{thm:Optimal-Lipschitz-Decompositions-in-lp}.
Unfortunately, this approach has a serious barrier.
For $\ell_{\log n}$, we have $\beta^{*}_{n}(\ell_{\log n}) = \Omega(\log n)$, since every $n$-point metric embeds into $\ell_{\log n}$ with $O(1)$ distortion by~\cite{Matousek97}, and there is an $\Omega(\log n)$ lower bound for Lipschitz decomposition of general $n$-point metrics~\cite{Bartal96}.
Improving the $p^4$ factor in our analysis to $o(\sqrt{p})$ would imply that $\beta^{*}_{n}(\ell_{\log n}) = o(\log n)$, contradicting the known lower bound.

\begin{remark}
Naor~\cite{Naor17} shows that his upper bound on $\beta^*(\ell_\infty^{d})$
has an important application to the Lipschitz extension problem. 
More precisely, he proves an infinitary variant of his upper bound,
and that it implies a similar bound on $e(\ell_{\infty}^{d})$,
which is the Lipschitz extension modulus of $\ell_\infty^d$.
He thus concludes that $e(\ell_{\infty}^{d}) \leq O(\sqrt{d\log d})$,
which almost matches (up to lower order factors), the lower bound
$e(\ell_{\infty}^{d}) \geq \Omega(\sqrt{d})$ that follows from~\cite{BB05, BB06}.
We have not attempted to extend \cref{cor:Optimal-l_p_d_Lipschitz-Decomposition}
to the infinitary variant,
as Naor notes that it is required only for extension theorems
into certain exotic Banach spaces \cite[Appendix A, Remark 4]{Naor17}. 
\end{remark}

\begin{remark}
The result of \Cref{thm:Optimal-Lipschitz-Decompositions-in-lp} extends to a related notion of decomposition, 
that was introduced in~\cite{FN22} and immediately implies geometric spanners.
This yields spanners for $\ell_p$ spaces, $p > 2$,
whose stretch-size tradeoff is comparable to that known for $\ell_2$.
Previously, weaker bounds for such decompositions, 
and consequently also weaker spanners for $\ell_p$, 
were proved in~\cite{KP25}. 
The details, which are similar to \Cref{thm:Optimal-Lipschitz-Decompositions-in-lp}, are omitted.
\end{remark}

\subsection{Nearest Neighbor Search}

The Nearest Neighbor Search (NNS) problem is to design a data structure
that preprocesses an $n$-point dataset $V$ residing in a metric $\M$,
so that given a query point $q\in \M$,
the data structure reports a point in $V$ that is closest to $q$
(and approximately closest to $q$ in approximate NNS). 
The main measures for efficiency
are the data structure's space complexity and the time it takes to answer a query; a secondary measure is the preprocessing time, which is often proportional to the space.
The problem has a wide range of applications in machine learning, computer vision and other fields,
and has thus been studied extensively,
including from theoretical perspective, see e.g.\ the survey \cite{AndoniIndyk}.
It is well known that approximate NNS reduces to solving $\polylog(n)$ instances
of the approximate \emph{near} neighbor problem~\cite{IM98},
hence we consider the latter.

\begin{definition}[Approximate Near Neighbor]
The Approximate Near Neighbor problem for a metric space \((\M, d_\M)\)
and parameters $c \geq 1$, $r>0$, abbreviated \emph{\((c,r)\)-ANN},
is the following. 
Design a data structure 
that preprocesses an \(n\)-point subset \(V \subseteq \M\), 
so that given a query \(q \in \M\) with $d_\M(q,V) \leq r$,%
\footnote{If $d_\M(q,V) > r$, it may report anything,
  where as usual, $d_\M(q,V)\coloneqq \min_{x^* \in V} d_\M(x^*, q)$. 
}
it reports \(x \in V\) such that
\[
  d_\M(q, x) \leq cr.
\]
In a randomized data structure,
the reported $x$ satisfies this with probability at least $2/3$. 
\end{definition}

We prove the following theorem, whose 
proof appears in \Cref{sec:ANN} and is similar in spirit
to that of \Cref{thm:Optimal-Lipschitz-Decompositions-in-lp}.
It applies our method of recursive embedding,
using Mazur maps for $n$-point subsets of $\ell_p^d$.

\begin{theorem}
\label{thm:BNN-for-l_p}
Let $p>2$, $d\ge 1$ and $0<\eps<1$.
Then for $c=O(p^{1+\ln 4+\eps})$ and every $r>0$,
there is a randomized data structure for $(c,r)$-ANN in $\ell_p^d$, 
that has query time $\poly(\eps^{-1}d\log n)$, 
and has space and preprocessing time
$\poly(d n^{\eps^{-1}\log p} )$.
\end{theorem}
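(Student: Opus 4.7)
The plan is to prove the theorem by recursively reducing $(c,r)$-ANN in $\ell_p^d$ through a chain of intermediate $\ell_q^d$ spaces with decreasing exponent, bottoming out at $\ell_2^d$ where a standard data structure (e.g., Euclidean LSH) serves as the base case.

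I would first establish a single-level reduction: given a $(c_0, r)$-ANN data structure for finite subsets of $\ell_q^d$ with $2 \le q < q'$, construct one for $\ell_{q'}^d$ whose approximation is $O(c_0)$ (up to a factor depending on the ratio $q'/q$). The reduction uses the Mazur map $M_{q',q}(x)_i = \mathrm{sign}(x_i)\,|x_i|^{q'/q}$, which is a uniform homeomorphism between the unit spheres of $\ell_{q'}^d$ and $\ell_q^d$ but only H\"older-continuous off a fixed scale. To handle this, I would apply it in a data-dependent way: first coarsen the dataset by a scale-$r$ partition (in the spirit of the Lipschitz decomposition of \Cref{thm:Optimal-Lipschitz-Decompositions-in-lp}), restricting each subproblem to a cluster of $\ell_{q'}$-diameter $\Theta(r)$, then translate and rescale so that relevant pairs lie near the unit sphere, and finally apply $M_{q',q}$. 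On that shell, $M_{q',q}$ is bi-Lipschitz with $O(1)$ distortion, yielding a constant approximation blow-up per level. At query time, the algorithm identifies the $\poly(d\log n)$ subproblems that could contain a near neighbor of $q$, applies the same centering/rescaling, and recurses into each.

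Iterating this reduction with ratio $q'/q \approx e$ yields $\ln p$ levels, each multiplying the approximation by at most $4$ (the constant balancing the H\"older and Lipschitz bounds of $M_{q',q}$ against the partition parameter), for a cumulative factor of $4^{\ln p} = p^{\ln 4}$. Bottoming out at $\ell_2^d$ with Euclidean LSH of approximation $1+O(\epsilon)$, space $n^{O(1/\epsilon^2)}$, and query time $\poly(\epsilon^{-1}d\log n)$, and accumulating the constant-factor blow-ups across the $\log p$ levels (including the overhead of the scale-$r$ partition), gives the claimed approximation $c = O(p^{1+\ln 4+\epsilon})$, space and preprocessing $\poly(dn^{\epsilon^{-1}\log p})$, and query time $\poly(\epsilon^{-1}d\log n)$.

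The main obstacle is the data-dependent application of the Mazur map: since $M_{q',q}$ is only a uniform homeomorphism between unit spheres, turning it into an ANN reduction requires pinning the scale via a Lipschitz-decomposition-type partition and then, at query time, recentering and rescaling each candidate piece without advance knowledge of the query. Keeping the per-level approximation blow-up down to an absolute constant (namely $4$, which drives the $p^{\ln 4}$ exponent in the final bound) rests on a careful balance between the Mazur map's H\"older exponent $q'/q$ and the partition parameter, and is the crux of the argument -- precisely the doubly-recursive embedding idea the paper highlights.
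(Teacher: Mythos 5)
Your high-level plan (chain of Mazur-map reductions through intermediate $\ell_q$ spaces with ratio tending to $1$, bottoming out in $\ell_2$) matches the paper's outer recursion, but the single-level reduction you propose has a genuine gap, and it is exactly the point where the paper's \emph{second} (inner) recursion is needed. You localize the Mazur map by partitioning the dataset into clusters of $\ell_{q'}$-diameter $\Theta(r)$ ``in the spirit of Lipschitz decomposition.'' A Lipschitz decomposition only guarantees $\Pr[\text{separation}] \leq \beta\, d(x,y)/\Delta$, and for the relevant subsets of $\ell_p$, $p>2$, one has $\beta = \Omega(\sqrt{\log n})$ in the worst case. So with $\Delta = \Theta(r)$ the bound is vacuous: the query and its true near neighbor (at distance up to $r$) may be separated with probability $1$, and independent repetitions do not help because the guarantee is only an upper bound on the separation probability. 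To make the non-separation probability constant you must take $\Delta = \Omega(\beta r)$, but then the cluster has diameter $\Omega(r\sqrt{\log n})$, the Mazur map restricted to it is no longer $O(1)$-bi-Lipschitz at scale $r$ (its Hölder lower bound degrades by $(\mathrm{diam}/r)^{p/q-1}$), and the per-level blow-up becomes $\mathrm{polylog}(n)$ rather than a constant, destroying the $p^{O(1)}$ bound. There is also the unaddressed question of how the query locates, at query time, the data-dependent cluster(s) that could contain its near neighbor — this is itself a near-neighbor problem unless the partition is induced by an explicit space partition. The paper avoids all of this by never partitioning at scale $r$: it keeps, for every data point $x$, a data structure on the Mazur image of $B_p(x, 2rc_p)\cap V$ where $c_p$ is the \emph{current} approximation guarantee of a coarse ANN structure for $\ell_p$ (starting from $c_p=\mathrm{poly}(d)$ via \cite{Chan98}), and runs an inner recursion that iteratively shrinks $c_p$ down to $O(c_t)$; the factor-$n$ space per level of the outer recursion, hence $n^{O(\eps^{-1}\log p)}$ total, comes precisely from these per-point structures $A_x$.

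Secondarily, your parameter choices do not produce the claimed exponent. With ratio $q'/q \approx e$ the per-level blow-up in the paper's recurrence $c_{new} = (\tfrac{p}{t})^{t/p} c_t^{t/p} (4c_p)^{1-t/p}$ is a constant much larger than $4$ at the fixpoint (roughly $e\cdot 4^{e-1}$), so $\ln p$ levels give an exponent near $\ln(e\cdot 4^{e-1}) \approx 3.4$, not $1+\ln 4$; the exponent $1+\ln 4 + \eps$ requires taking the per-level ratio $t/p = 1-\eps \to 1$, with the $4$ entering as $4^{1-t/p}=4^{\eps}$ per level over $O(\eps^{-1}\log p)$ levels, and the extra factor $p$ accumulating from $(\tfrac{p}{t})^{t/p}$ — not from ``overhead of the scale-$r$ partition.'' Also, a $(1+\eps)$-LSH base case with space $n^{O(1/\eps^2)}$ is both unnecessary (a constant-factor $(2,r)$-ANN after Johnson--Lindenstrauss suffices) and inconsistent with the claimed space bound for small $\eps$.
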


\begin{remark*}
Picking $\eps=\tfrac{1}{\log p}$ is sufficient to get approximation $O(p^{1+\ln 4}) \leq O(p^{2.387})$. 
\end{remark*}

Most prior work on ANN in $\ell_p$ spaces
studies the case $1\leq p \leq 2$, where $(O(1),r)$-ANN can be solved using
query time $\poly(d\log n)$ and space $\poly(n)$~\cite{KOR00,IM98,HIM12}.
For $p>2$, such a bound is not known,
and we list in \Cref{table:ANN_related} all the known results (ours and previous ones),
which are often incomparable.
The results of \cite{Andoni09_thesis,AIK09} and of~\cite{ANRW21}
are based on Indyk's~\cite{Indyk01_ell_infty} result for $\ell_\infty$, 
and are most suitable for large values of $p$;
note though that the \emph{preprocessing time} of~\cite{ANRW21} is exponential in $d$.
The other results are more suited for small values of $p>2$, 
and they all have different downsides:
one result~\cite{BG19} has a large approximation $2^{O(p)}$; 
another one~\cite{ANNRW18a, ANNRW18b, KNT21} has a large query time $n^\eps \cdot \poly(d\log n)$,
which can be mitigated by picking $\eps = \tfrac{1}{\log n}$,
at the cost of increasing the approximation to $O(p\log n)$; 
ours (\Cref{thm:BNN-for-l_p}) has a large space $n^{O(\log p)}$; 
and lastly, \cite{BBMWWZ24} and \cite{AIK09, Andoni09_thesis}
can achieve $O(1)$-approximation
but this requires an even larger space $d\cdot n^{ 2^{O(p)}\log(1/\eps)}$ and $n^{O(\log d)}$, respectively.
The bottom line is that the regime of $p>2$ is notoriously difficult.
It remains open to bridge the gap between small $p$ and large $p$,
and specifically to obtain $O(p)$-approximation using 
$\poly(d\log n)$ query time and $\poly(n)$ space.

Our result for ANN provides yet another illustration for the power of recursive embedding.
Bartal and Gottlieb~\cite{BG19} mentioned that Assaf Naor noted,
in personal communication regarding improving their $2^{O(p)}$-approximation,
that all uniform embeddings of $\ell_p$ to $\ell_2$ (like Mazur maps) have distortion exponential in $p$~\cite[Lemma 5.2]{Naor14}.
Our use of recursive embeddings breaks this barrier, 
and essentially provides a black-box reduction from $\ell_p$ to $\ell_2$,
that still uses Mazur maps but achieves $\poly(p)$-approximation.
We note that the improved approximation of~\cite{ANNRW18a, ANNRW18b, KNT21} 
uses embedding into $\ell_2$ with small average distortion, 
however this approach is not known to provide a black-box reduction for ANN, 
and its specialized solution increases the query time.

\begin{table}[t]
\begin{center}
\begin{tabular}{llll}
\midrule
Approximation  & Query time & Space & Reference  \\
\midrule
\midrule
$O(\eps^{-1}\log\log d)$ & $n^{\eps^p}$   & $n^{1+\eps}$ & \cite{AIK09, Andoni09_thesis}  \\
$O_\eps(\log p \cdot(\log d)^{2/p})$ & $n^{\eps}$   & $n^{1+\eps}$ & \cite{ANRW21}  \\
\midrule
$2^{O(p)}$     & $(d\log n)^{O(1)}$ & $n^{O(1)}$ & \cite{BG19} \\
$p^{O(1)}$     & $(d\log n)^{O(1)}$ & $n^{O(\log p)}$ & Thm~\ref{thm:BNN-for-l_p}  \\
$O(p/\eps)$    & $n^\eps$           & $n^{1+\eps}$                                 & \cite{ANNRW18a, ANNRW18b, KNT21} \\
$c$     & $n^\eps$ & $n^{ (cp \cdot 2^{p/c})^{O(1)}\log(1/\eps)}$ & \cite{BBMWWZ24}  \\
\midrule
\end{tabular}
\end{center}
\caption{Known data structures for ANN in $\ell_p$, $p>2$.
  For brevity, we omit here $\poly(d\log n)$ factors when the complexity is polynomial in $n$.
  The top-listed two results are particularly suited for large values of $p$, and the others are suited for small values of $p$.
}
\label{table:ANN_related}
\end{table}

\subsection{Low-Distortion Embeddings}
After we obtained our aforementioned results for Lipschitz decomposition and NNS, we noticed the online posting of \cite{NR25} on the distortion required for embedding $\ell_p$ space ($p>2$) into Euclidean space, and used our technique to extend their result.
The study of the distortion required for embedding metrics into Euclidean space has a decades-long history for general metrics~\cite{John48, Bourgain85, LLR95} and for $\ell_p$ space~\cite{Lee05, CGR05, ALN08, CNR24, BG14_v2, NR25}.
For an infinite metric space \( (\M, d_\M) \), define $c_2^n(\M) \coloneqq \sup_{\mathcal{C} \subseteq \M, \, |\mathcal{C}| \leq n} c_2(\mathcal{C})$, 
where $c_2(\mathcal{C})$ denotes the minimal distortion needed to embed $\mathcal{C}$ into $\ell_2$. 
We prove the following in \Cref{sec:Embedding-finite-l_p-metrics-into-l_2}.

\begin{theorem}\leavevmode\hspace*{-1.1in}\phantomsection\hspace{1.1in}\label{thm:improved-c_2-distortion}
If \( 3 < p < 3\sqrt{e} \), then for every fixed $0< \eps \leq 1$,

\[
c_{2}^{n}(\ell_p) \leq O(\log^{\frac{1}{2}+\ln{\frac{p}{3}} + \varepsilon}{n}).
\]
\end{theorem}
Previously, for $p>2$, non-trivial distortion was only known in the range 
$2<p<4$~\cite{BG14_v2,NR25}, where non-trivial means distortion asymptotically smaller than \( O(\log{n}) \), which holds for every \( n \)-point metric space~\cite{Bourgain85}. 
Bartal and Gottlieb~\cite{BG14_v2} established that \( c_2^n(\ell_p) = O(\log^{p/4}{n}) \) for every \( p \in (2,4) \), and
Naor and Ren~\cite{NR25} proved a better bound \( c_2^n(\ell_p) = O(\sqrt{\log{n}} \cdot\log \log {n}) \) for \( p \in (2,3] \) and \( c_2^n(\ell_p) = O(\log^{p/2-1}{n} \cdot\log \log {n}) \) for \( p \in (3,4) \). 
\Cref{thm:improved-c_2-distortion} improves these bounds further 
in the range \( 3 < p < 3\sqrt{e} \).  
Since it may not be immediate that \Cref{thm:improved-c_2-distortion} indeed improves the bounds on \( c_{2}^{n}(\ell_p) \) for all \( 3<p<3\sqrt{e} \), 
we plot the corresponding exponents of the \( \log n \) factor in \Cref{fig:exponent_of_log_NR25}.

\begin{figure}
\begin{center}
    \includegraphics[width=0.5\linewidth]{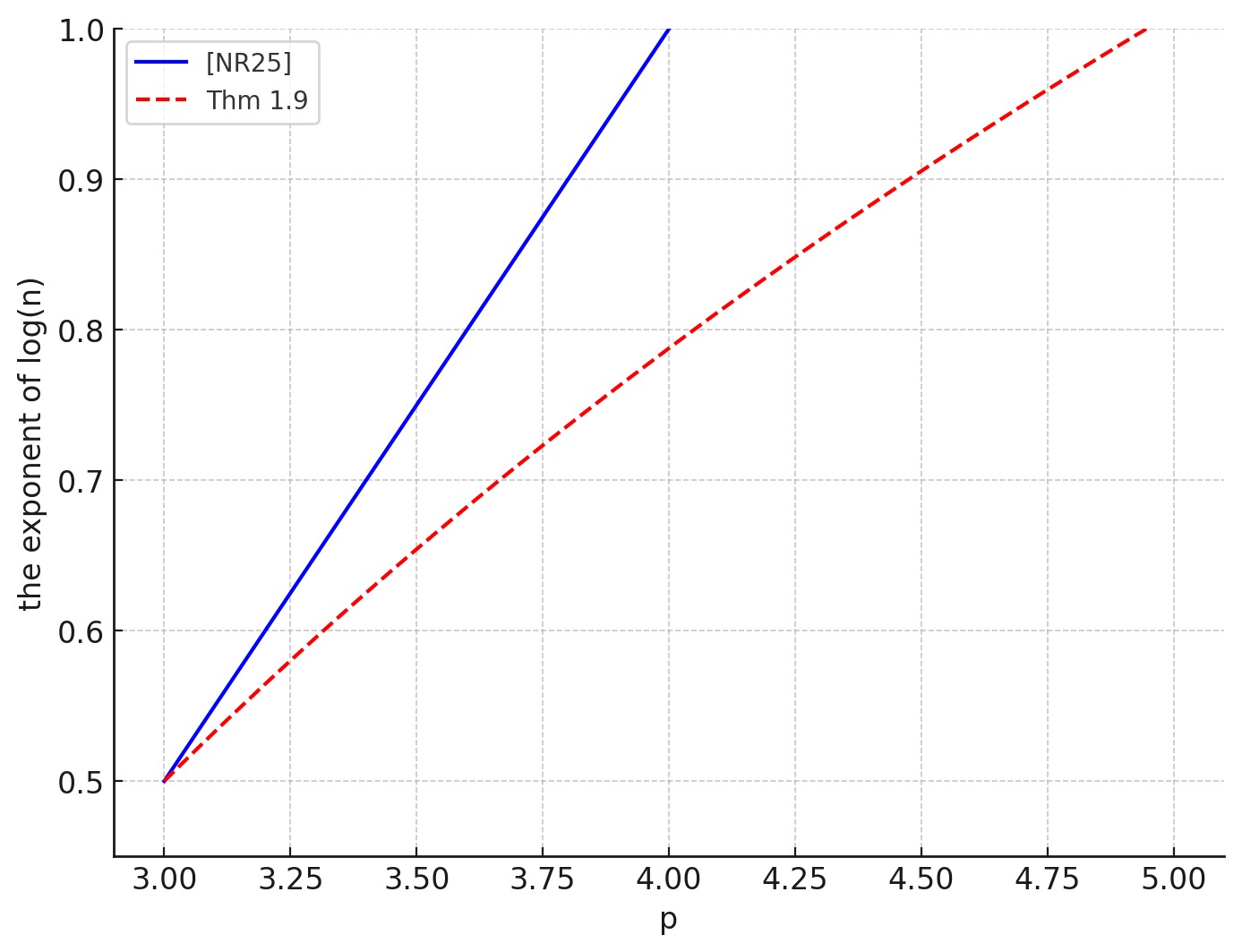}
    \caption{The distortion of embedding from $\ell_p$, $p>3$ into $\ell_2$
    shown by depicting the exponent of $\log n$ 
    in \cite[Theorem 1]{NR25} (blue) 
    compared with our bound in \Cref{thm:improved-c_2-distortion} (red).
    }\label{fig:exponent_of_log_NR25}
\end{center} 
\end{figure}

\begin{remark}
\label{rem:embedding-lower-bound}
Every finite metric embeds isometrically in $\ell_\infty$, 
and thus  $c_2^n(\ell_\infty)=\Theta(\log{n})$ by \cite{Bourgain85} and \cite{LLR95}. 
For $\ell_p$, $p \in (2, \infty)$, a lower bound of  
\[
c_2^n(\ell_p) \geq \Omega(\log^{1/2 - 1/p} n)
\]  
follows from \cite[Theorem 1.3]{LN13}.
\end{remark}

\section{Preliminaries}

The main tool we use for recursive embeddings between \( \ell_p \) spaces is a classical embedding, commonly known as the Mazur map. For every \( p, q \in [1, \infty) \), the Mazur map  $M_{p,q}: \ell^m_p \to \ell^m_q$ is computed by raising the absolute value of each coordinate to the power \( p/q \) while preserving the original signs. The following key property of this map is central to all our results.

\begin{theorem} [\cite{benyamini1998geometric,BG19}]\label{thm:mazur_map}
Let $1 \leq q < p < \infty$ and $C_0>0$,
and let $M$ be the Mazur map $M_{p,q}$ scaled down by factor $\frac{p}{q}{C_0}^{p/q-1}$.
Then for all $x,y\in\ell_{p}$ such that $||x||_p,||y||_p\leq C_0$,
\[
  \tfrac{q}{p} (2 C_0 )^{1 - p/q} ||x - y||_{p}^{p/q}
  \leq ||M(x) - M(y)||_{q}
  \leq ||x - y||_{p} .
\]
\end{theorem}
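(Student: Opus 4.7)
The plan is to reduce both inequalities to pointwise estimates on the scalar function $\phi(t) \coloneqq \mathrm{sign}(t)\,|t|^{\alpha}$ with $\alpha \coloneqq p/q \geq 1$, which governs the Mazur map coordinate-by-coordinate. Once the right pointwise bounds are in place, the upper bound follows from one application of H\"older's inequality and the lower bound from a direct summation, after which dividing by the scaling factor $\alpha C_{0}^{\alpha-1}$ produces the stated constants.

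For the upper bound, I first prove the Lipschitz-type pointwise estimate
\[
  |\phi(s) - \phi(t)| \;\leq\; \alpha \max(|s|,|t|)^{\alpha-1}\,|s-t|
  \qquad \text{for all } s,t\in\mathbb{R},
\]
which follows immediately from $\phi'(u)=\alpha|u|^{\alpha-1}$ and the fundamental theorem of calculus (noting that $\phi$ is differentiable at $0$ because $\alpha \ge 1$). Raising to the $q$-th power, summing over coordinates, and applying H\"older's inequality with conjugate exponents $p/(p-q)$ and $p/q$ (using the identity $q(\alpha-1)\cdot p/(p-q)=p$) yields
\[
  \|M_{p,q}(x) - M_{p,q}(y)\|_{q}^{q}
  \;\leq\; \alpha^{q} \Bigl(\sum_{i} \max(|x_{i}|,|y_{i}|)^{p}\Bigr)^{(p-q)/p} \|x-y\|_{p}^{q}.
\]
The inner sum is bounded by $\|x\|_{p}^{p}+\|y\|_{p}^{p}\leq 2C_{0}^{p}$; taking the $q$-th root and dividing by $\alpha C_{0}^{\alpha-1}$ then gives the desired $\|M(x)-M(y)\|_{q}\leq\|x-y\|_{p}$, where the harmless $2^{1/q-1/p}$ prefactor is absorbed into the scaling constant (this is the only place where the exact form of the upper bound is constant-sensitive).

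For the lower bound, I establish the reverse pointwise estimate
\[
  |\phi(s) - \phi(t)| \;\geq\; 2^{1-\alpha}\,|s-t|^{\alpha}.
\]
When $s,t$ have the same sign, superadditivity of $u\mapsto u^{\alpha}$ on $[0,\infty)$ gives the stronger bound $|\phi(s)-\phi(t)|\geq|s-t|^{\alpha}$ (apply $(u+v)^{\alpha}\geq u^{\alpha}+v^{\alpha}$ to $u=|s-t|$, $v=\min(|s|,|t|)$). When $s,t$ have opposite signs, I combine the identity $|\phi(s)-\phi(t)|=|s|^{\alpha}+|t|^{\alpha}$ with the power-mean inequality $|s|^{\alpha}+|t|^{\alpha}\geq 2^{1-\alpha}(|s|+|t|)^{\alpha}=2^{1-\alpha}|s-t|^{\alpha}$. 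Raising to the $q$-th power, summing, and using $q\alpha=p$ gives
\[
  \|M_{p,q}(x) - M_{p,q}(y)\|_{q} \;\geq\; 2^{1-\alpha}\,\|x-y\|_{p}^{\alpha},
\]
and a direct computation shows that dividing by $\alpha C_{0}^{\alpha-1}$ yields exactly $\tfrac{q}{p}(2C_{0})^{1-p/q}\|x-y\|_{p}^{p/q}$, as claimed.

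The main subtlety is ensuring that both pointwise estimates hold \emph{uniformly in the mixed-sign case}, where the natural same-sign arguments break down and one loses a factor of $2^{\alpha-1}$; fortunately, this loss is absorbed into the exact constants appearing in the statement. Beyond the pointwise inequalities, the only structural input needed is the $\ell_{p}$-norm bound $\|x\|_{p},\|y\|_{p}\leq C_{0}$, which is used exactly once to control the $\max$-sum in the H\"older step.
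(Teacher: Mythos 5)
Your lower-bound half is correct and gives exactly the stated constant: the pointwise estimate $|\phi(s)-\phi(t)|\ge 2^{1-\alpha}|s-t|^{\alpha}$ (with $\alpha=p/q$; same signs via superadditivity, opposite signs via the power-mean inequality) sums to $\|M_{p,q}(x)-M_{p,q}(y)\|_q\ge 2^{1-\alpha}\|x-y\|_p^{\alpha}$, and dividing by $\tfrac{p}{q}C_0^{p/q-1}$ yields precisely $\tfrac{q}{p}(2C_0)^{1-p/q}\|x-y\|_p^{p/q}$. The genuine gap is in the upper bound. Your chain (FTC pointwise bound with $\max(|s|,|t|)^{\alpha-1}$, then H\"older, then $\sum_i\max(|x_i|,|y_i|)^p\le \|x\|_p^p+\|y\|_p^p\le 2C_0^p$) only proves $\|M(x)-M(y)\|_q\le 2^{1/q-1/p}\|x-y\|_p$ for the map $M$ scaled by the factor \emph{prescribed in the statement}. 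The claim that the $2^{1/q-1/p}$ prefactor is ``absorbed into the scaling constant'' is not available: the scaling factor $\tfrac{p}{q}C_0^{p/q-1}$ is part of the theorem and is used with exactly this value in the paper (the non-expansiveness of $f^{K_i}$ is invoked verbatim in \Cref{lem:iterative_decomposition}, and the Mazur maps in \Cref{claim:ANN_Mazur_Map} are scaled by exactly $\tfrac{p}{t}(2rc_p)^{p/t-1}$); and if you instead enlarged the scaling to absorb the factor, your lower-bound constant would degrade by the same factor, so the statement is not recovered either way. The loss is intrinsic to your approach: for disjointly supported $x,y$ on the sphere of radius $C_0$ the bound $\sum_i\max(|x_i|,|y_i|)^p\le 2C_0^p$ is tight, so no tightening of the later steps within your chain can remove the factor.

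The standard route (this theorem is cited by the paper to \cite{benyamini1998geometric,BG19} without proof) obtains the constant $\tfrac{p}{q}C_0^{p/q-1}$ exactly by integrating along the segment instead of bounding by the endpoint maximum: write coordinatewise
\[
\phi(x_i)-\phi(y_i)=\alpha\,(x_i-y_i)\int_0^1 \big|y_i+\theta(x_i-y_i)\big|^{\alpha-1}\,d\theta ,
\]
pull the $\theta$-integral outside the $\ell_q$-norm by Minkowski's integral inequality, apply H\"older with your same exponents $p/q$ and $p/(p-q)$ for each fixed $\theta$ to get the bound $\alpha\,\|x-y\|_p\,\|y+\theta(x-y)\|_p^{\alpha-1}$, and finally use convexity of the ball, $\|y+\theta(x-y)\|_p\le C_0$ for all $\theta\in[0,1]$, to conclude $\|M_{p,q}(x)-M_{p,q}(y)\|_q\le \tfrac{p}{q}C_0^{p/q-1}\|x-y\|_p$ with no extra factor. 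If you replace your H\"older step by this segment argument (keeping your lower-bound argument as is), the proof is complete and matches the statement.
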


\section{Lipschitz Decomposition of $\ell_p$ Metrics}
\label{sec:iterative_lipschitz_decomp}
In this section, we prove \Cref{thm:Optimal-Lipschitz-Decompositions-in-lp}. We first outline the proof.
Our approach uses a double recursion, where each recursion is an instance of recursive embedding.
The first recursion takes a Lipschitz decomposition of a finite subset $\M \subset \ell_{p}^{d}$ with decomposition parameter $\beta$ and produces a Lipschitz decomposition with (ideally smaller) decomposition parameter $\beta_{new}$.
Each iteration in this recursion is as follows.
We first use the given decomposition
to decompose $\M$ into bounded-diameter subsets, embed each subset into $\ell_{q}$ for $q<p$ using Mazur maps, employ Lipschitz decomposition for $\ell_{q}$, and pull back
the solution (clusters) we found.
It is natural to choose here $q=2$,
because the known Lipschitz decompositions for $\ell_2$ are tight.
However, this choice leads to a decomposition parameter with an $\exp(p)$ factor,
and we overcome this by picking $q=p/2$.
We only then apply a second recursion,
which goes from $\ell_p$ to $\ell_2$ gradually, via intermediate values $2<q<p$.

\begin{lemma}\label{lem:iterative_decomposition}
Let $2\leq q<p<\infty$ and let $\M\subset \ell_p$ be an $n$-point metric. Suppose that for every $\Delta'>0$, there exists a $(\beta,\Delta')$-Lipschitz decomposition of $\M$.
Then, for every $\Delta>0$, there exists a $(\beta_{new},\Delta)$-Lipschitz decomposition of $\M$, with
    \[
    \beta_{new} = 4(\tfrac{p}{2q})^{q/p} \ [\beta^{*}_{n}(\ell_q)]^{q/p} \ \beta^{1-q/p}.
    \]
\end{lemma}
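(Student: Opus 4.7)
The plan is to construct the desired $(\beta_{new},\Delta)$-Lipschitz decomposition of $\M$ via a two-level random partition. In the outer level, sample a partition $P_1$ from the hypothesized $(\beta,\Delta'')$-Lipschitz decomposition of $\M$, where $\Delta''=t\Delta$ for a parameter $t>0$ to be optimized; by assumption, each cluster $C\in P_1$ has $\ell_p$-diameter at most $\Delta''$. In the inner level, for each such $C$ independently, translate $C$ so that some fixed reference point becomes the origin (this preserves all pairwise distances while ensuring $\|x\|_p\le\Delta''$ for all $x\in C$), apply the Mazur map $M=M_{p,q}$ scaled as in \Cref{thm:mazur_map} with $C_0=\Delta''$ to obtain $M(C)\subset \ell_q$ of size $\le n$, sample a $(\beta^*_n(\ell_q),\Delta_q)$-Lipschitz decomposition of $M(C)$ (valid by the definition of $\beta^*_n(\ell_q)$), and pull the resulting sub-clusters back to $C$ via $M^{-1}$. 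The final partition is the union over $C\in P_1$ of these inner sub-clusters.

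To control the diameter, use the lower estimate of \Cref{thm:mazur_map}: if $x,y\in C$ satisfy $\|M(x)-M(y)\|_q\le\Delta_q$, then $\|x-y\|_p \le (p/q)^{q/p}(2\Delta'')^{1-q/p}\Delta_q^{q/p}$, so setting $\Delta_q \coloneqq \tfrac{q}{p}(2\Delta'')^{1-p/q}\Delta^{p/q}$ guarantees that every final sub-cluster has $\ell_p$-diameter at most $\Delta$. For the separation probability of a pair $x,y\in\M$, take a union bound over the two levels: $P_1$ separates them with probability at most $\beta\|x-y\|_p/\Delta''$, and conditional on being in a common cluster $C$, the inner $\ell_q$-decomposition separates $M(x),M(y)$ with probability at most $\beta^*_n(\ell_q)\|M(x)-M(y)\|_q/\Delta_q \le \beta^*_n(\ell_q)\|x-y\|_p/\Delta_q$, where the last step uses the upper estimate $\|M(x)-M(y)\|_q\le\|x-y\|_p$ from \Cref{thm:mazur_map}. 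Hence the construction produces a $(\beta_{new},\Delta)$-Lipschitz decomposition with
\[
  \beta_{new} \;=\; \beta\Delta/\Delta'' + \beta^*_n(\ell_q)\Delta/\Delta_q
  \;=\; \beta/t + (p/q)\beta^*_n(\ell_q)(2t)^{p/q-1},
\]
after substituting $\Delta''=t\Delta$ and the chosen $\Delta_q$.

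It remains to optimize over $t$. Writing $\alpha\coloneqq p/q>1$, the two terms are balanced by taking $t^\alpha = \beta/\bigl(\alpha\,\beta^*_n(\ell_q)\,2^{\alpha-1}\bigr)$, which renders each term equal to $\bigl(\alpha 2^{\alpha-1}\beta^*_n(\ell_q)\bigr)^{1/\alpha}\beta^{1-1/\alpha} = 2(\alpha/2)^{1/\alpha}[\beta^*_n(\ell_q)]^{1/\alpha}\beta^{1-1/\alpha}$. Their sum is therefore exactly $4(p/(2q))^{q/p}[\beta^*_n(\ell_q)]^{q/p}\beta^{1-q/p}$, matching the claimed bound.

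The main obstacle is the non-linearity of the Mazur map, which contracts $\ell_p$-distances super-linearly (by a power $p/q$) inside a ball of radius $\Delta''$. This couples $\Delta''$ and $\Delta_q$ in a non-trivial way: enlarging $\Delta''$ reduces the outer separation rate but simultaneously worsens the pullback distortion, forcing $\Delta_q$ to shrink and hence inflating the inner separation rate. Balancing this trade-off is precisely what produces a bound depending geometrically on $\beta$ and $\beta^*_n(\ell_q)$ with exponents $1-q/p$ and $q/p$, which is exactly the gain that will be harvested by iterating the lemma with $q$ slowly decreasing from $p$ down to $2$ in the proof of \Cref{thm:Optimal-Lipschitz-Decompositions-in-lp}. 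Beyond this conceptual point, the rest is a routine algebraic calculation.
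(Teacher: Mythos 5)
Your proposal is correct and follows essentially the same route as the paper's proof: an outer application of the assumed decomposition at scale $t\Delta$, a Mazur-map embedding of each cluster into $\ell_q$ (with the translation ensuring the norm bound needed for \Cref{thm:mazur_map}), an inner $\ell_q$-decomposition pulled back via preimages, a union bound over the two levels for the separation probability, and the lower distortion estimate for the diameter bound. The only difference is cosmetic: you leave the scale parameter free and optimize at the end, whereas the paper fixes the equivalent parameters $a,b$ upfront to satisfy the same balancing equations, yielding the identical value of $\beta_{new}$.
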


\Cref{lem:iterative_decomposition} provides the recursion step for the first recursion from the outline above,
and we use it with $q=p/2$.
For the natural choice of $q=2$, the expression in \Cref{lem:iterative_decomposition} equals $\beta_{new} = 4 (p/4)^{2/p} \ [\beta^{*}_{n}(\ell_2)]^{2/p} \ \beta^{1-2/p}$, 
hence iterative applications converge to the fixpoint $\beta = \tfrac{p}{4} 2^p \cdot \beta^{*}_{n}(\ell_2)$,
which is easily found by setting $\beta=\beta_{new}$.
In contrast, for $q=p/2$, the expression simplifies to 
$\beta_{new} = 4\sqrt{\beta^{*}_{n}(\ell_{p/2})\cdot \beta}$, 
the fixpoint is now $\beta = 16 \beta^{*}_{n}(\ell_{p/2})$,
and recursion on $p$ introduces only a $\poly(p)$ factor.

\begin{proof}
    Let \( \Delta > 0, p \in (2, \infty) \), and let \(\M \subset \ell_{p} \) be an \( n \)-point metric space.
    Set $a\coloneqq\tfrac{1}{2} \big(\tfrac{2q\beta}{p\beta^{*}_{n}(\ell_q)}\big)^{q/p}$ and $b \coloneqq \tfrac{\beta^{*}_{n}(\ell_{q})a}{\beta}$, chosen to satisfy 
    \begin{equation}\label{eq:parameters_iteration}
        \tfrac{\beta}{a}=\tfrac{\beta^{*}_{n}(\ell_{q})}{b} \qquad \text{and} \qquad \tfrac{p}{q} (2a)^{p/q-1} b  = 1.
    \end{equation}
    Construct a partition of $\M$ in the following steps:
\begin{enumerate} \compactify
\item
  Draw a partition \( \Pinit  = \{ K_{1}, \ldots, K_{t} \} \) 
  from a \( (\beta, a \Delta) \)-Lipschitz decomposition of $\M$.
\item
  Embed each cluster $K_i \subset \ell_p$ into \( \ell_{q} \)
  using the embedding \( f^{K_{i}} \) provided by \Cref{thm:mazur_map} for $C_0 \coloneqq a \Delta$.
\item
   For each embedded cluster \( f^{K_{i}}(K_{i}) \), 
   draw a partition \( P_{i} = \{K_{i}^{1}, \ldots,K_{i}^{k_{i}} \} \) 
   from a \( (\beta^{*}_{n}(\ell_{q}), b \Delta) \)-Lipschitz decomposition of \( f^{K_{i}}(K_{i}) \).
\item 
  Obtain a final partition $\Pout$ by taking the preimage of every cluster of every $P_i$. 
\end{enumerate}

It is easy to see that $\Pout$ is indeed a partition of $\M$,
consisting of $\sum_{i=1}^t k_i$ clusters.
Next, consider $x,y\in \M$ and let us bound $\Pr[\Pout(x) \neq \Pout(y)]$. 
Observe that a pair of points can be separated only in steps 1 or 3. 
Therefore, 
\begin{align*}
  \Pr &\Big[\Pout(x) \neq \Pout(y)\Big]
  \\
  & \leq \Pr\Big[\Pinit (x) \neq \Pinit (y)\Big]
  +  \Pr\Big[P_i(f^{K_i}(x)) \neq P_i(f^{K_i}(y)) \mid \Pinit (x) = \Pinit (y) = K_{i}\Big]
  \\
  & \leq \beta \frac{\|x - y\|_{p}}{a \Delta} + \beta^{*}_{n}(\ell_{q}) \frac{\|f^{K_i}(x) - f^{K_i}(y)\|_{q}}{b \Delta }
  \\
  & \leq  \Big(\tfrac{\beta }{a} + \tfrac{\beta^{*}_{n}(\ell_{q})}{b}\Big)\frac{\|x - y\|_{p}}{\Delta}  ,
\end{align*}
where the last inequality is because by \Cref{thm:mazur_map},
each $f^{K_i}$ is a non-expanding map from $K_{i}\subset \ell_p$ to $\ell_q$.
Using~\eqref{eq:parameters_iteration}, we obtain
$\beta_{new}
  = 2\tfrac{\beta}{a}
  = 4(\tfrac{p}{2q})^{q/p}[\beta^{*}_{n}(\ell_q)]^{q/p}\beta^{1-q/p}$.

It remains to show that the final clusters all have diameter at most \( \Delta \).
Let \( x, y \in\M \) be in the same final cluster, i.e., \( \Pout(x) = \Pout(y) \).
Then $\Pinit (x)=\Pinit (y)=K_i$ and $P_{i}(f^{K_i}(x))=P_{i}(f^{K_i}(y))$.
Combining the distortion guarantees of \( f^{K_i} \) from \Cref{thm:mazur_map}
with the diameter bound of $P_i$, 
we get
\[
  \frac{q}{p} \Big( 2 a\Delta \Big)^{1-p/q} \|x - y\|_{p}^{p/q} 
  \leq \|f^{K_i}(x) - f^{K_i}(y)\|_{q} 
  \leq b\Delta.
\]
Rearranging this and using~\eqref{eq:parameters_iteration}, we obtain
\( \|x - y\|_p^{p/q} \leq \tfrac{p}{q} (2a)^{p/q-1} b \Delta^{p/q} = \Delta^{p/q} \),
which completes the proof.
\end{proof}

We are now ready to prove the main theorem.

\begin{proof}[Proof of \Cref{thm:Optimal-Lipschitz-Decompositions-in-lp}]
  Let \( p \in (2, \infty) \), and let \(\M \subset \ell_{p} \) be an \( n \)-point metric space.
  For ease of presentation, we assume for now that $p$ is a power of $2$,
  and resolve this assumption at the end.
    Denote $\beta_0(\M) = O(\min\{d, \log n\})$, given by \cite{Bartal96} and \cite{CCGGP98}.
    We now iteratively apply \Cref{lem:iterative_decomposition} with $q=p/2$, and obtain after $k$ iterations,
    \begin{align}
        \beta_k(\M) &= 4 \sqrt{\beta^{*}_{n}(\ell_{p/2})\cdot \beta_{k-1}(\M)} \nonumber \\
        &= 4 \sqrt{\beta^{*}_{n}(\ell_{p/2})\cdot 4 \sqrt{\beta^{*}_{n}(\ell_{p/2})\cdot \beta_{k-2}(\M)}} \nonumber \\
        &= \cdots \nonumber \\
        &= {4}^{(1 + 1/2 + \ldots + 1/2^{k-1})} \ [\beta^{*}_{n}(\ell_{p/2})]^{(1/2+1/4+\ldots+1/2^k)} \ \beta_0(\M)^{1/2^k} \nonumber \\
        &\leq 16 \beta^{*}_{n}(\ell_{p/2})\cdot \beta_0(\M)^{1/2^k}. \label{eq:recursion_p_to_p/2}
    \end{align}
Picking
$k := \ceil{\log(\log p\log \beta_0(\M))} = O( \log(\log p\log \min\{d,\log n\}))$
yields $\beta_0(\M)^{1/2^k} \leq 2^{1/\log p}$, 
and we obtain $\beta^{*}(\M)\leq \beta_k(\M) \leq 2^{4+1/\log p} \cdot \beta^{*}_{n}(\ell_{p/2})$.
Now recursion on $p$ implies
\[
  \beta^*(\M)\leq 2 p^4\cdot \beta^{*}_{n}(\ell_{2}).
\]
Finally, by \cite{CCGGP98} and the JL Lemma~\cite{JL84} we know that
$\beta^{*}_{n}(\ell_{2}^{d}) \leq O(\min\{\sqrt{d},\sqrt{\log n}\})$,
which concludes the proof when $p$ is a power of $2$.

Resolving the case when $p$ is not a power of $2$ is straightforward. 
Let $q$ be the largest power of $2$ that is smaller than $p$, hence $1/2<q/p<1$. 
It suffices to show that $\beta^{*}_{n}(\ell_{p})= O(\beta^{*}_{n}(\ell_{q}))$,
as then we can apply the previous argument since $q$ is a power of $2$.
Now apply \Cref{lem:iterative_decomposition} for $k$ iterations,
analogously to~\eqref{eq:recursion_p_to_p/2}.
We may assume that $\beta^{*}_{n}(\ell_q)\leq \beta_{i}(\M)$ for all $i\leq k$,
as otherwise we can simply abort after the $i$-th iteration,
hence
$\beta_{k}(\M) = 4(\tfrac{p}{2q})^{q/p} \ [\beta^{*}_{n}(\ell_q)]^{q/p} \ \beta_{k-1}(\M)^{1-q/p}\leq 4 \sqrt{\beta^{*}_{n}(\ell_q) \ \beta_{k-1}(\M)}$.
Now similarly to~\eqref{eq:recursion_p_to_p/2} 
we get $\beta^{*}_{n}(\ell_{p})= O(\beta^{*}_{n}(\ell_{q}))$,
and the theorem follows. 
\end{proof}

\begin{remark}
\label{rem:refined-iterative-solution}
    We suspect that the factor $16$ in the recursion~\eqref{eq:recursion_p_to_p/2} is an artifact of the analysis. 
    First, by balancing the separation probabilities over all $k$ iterations, one can perhaps eliminate the factor $2$ increase in the probabilities, and thus improve the factor in the recursion to roughly $4$. Second, the Mazur maps require sets of bounded \emph{radius}, while the construction guarantees sets of bounded \emph{diameter}. 
    Our proof uses the trivial bound $\radius\leq \diam$, which holds for every metric space,
    and subsets of $\ell_p$ may admit a tighter bound. 
    Denote by $J_p\in [\tfrac{1}{2},1]$ the minimum number such that $\radius(\M)\leq J_p \diam(\M)$ for all $\M\subset \ell_p$.
    It is known that $J_\infty=1/2$ and by Jung's Theorem, $J_2= \tfrac{1}{\sqrt{2}}$.
    Then, the factor above improves to roughly $(2J_p)^{2}$. Keeping in mind the discussion following \Cref{cor:Optimal-l_p_d_Lipschitz-Decomposition}, and aiming for a clear presentation of the main ideas in the solution, we have omitted the above optimizations.
\end{remark}

\section{Nearest Neighbor Search}\label{sec:ANN}
In this section, we design a data structure for approximate NNS in \( \ell_p^d \) for \( p > 2 \), proving \Cref{thm:BNN-for-l_p}.
Previously, Bartal and Gottlieb~\cite{BG19} devised a data structure
that is based on embedding $\ell_p$ into $\ell_2$, for which good data structures are known (e.g., LSH),
and they furthermore employ recursion to improve the approximation factor, 
from a large trivial factor down to $\exp(p)$. 
We observe that their embedding and recursion approach 
is actually analogous to \Cref{sec:iterative_lipschitz_decomp}, but using only the special case $q=2$. 
We thus use our double recursion approach that goes through intermediate \( \ell_q \) spaces, 
and obtain an improved approximation factor $\poly(p)$.
In the rest of this section, we reserve the letter $q$ for the query point
(which is standard in the NNS literature) and denote the intermediate spaces by $\ell_t$.

\begin{proof}[Proof of \Cref{thm:BNN-for-l_p}]
First, we show an analogous claim to \Cref{lem:iterative_decomposition} but for the $(c,r)$-ANN problem. 
We take two NNS data structures, one for $\ell_p^d$ with approximation $c_p$ 
and one for $\ell_t^d$ (where $t<p$) with approximation $c_t$, 
and construct a new data structure for $\ell_p^d$ with approximation $c_{new}$ (ideally smaller than $c_p$).

Given an $n$-point dataset $V\subset\ell_p^d$, construct a $(c_p,r)$-ANN $A_{base}$ for $V$; and additionally, for every point $x\in V$, apply a Mazur map $M^{x}$ scaled down by $\frac{p}{t} \cdot (2 r c_{p})^{p/t-1}$ from $\ell_{p}^{d}$ to $\ell_{t}^{d}$ on $B_p(0,2rc_p)\cap (V-x)$, where $B_p(x,r) \coloneqq \{y : \|x-y\|_p\leq r\}$, and construct a $(c_t, r)$-ANN data structure $A_x$ for the image points.
Amplify their success probabilities to $5/6$ by standard amplification.
Given a query $q$, with the guarantee that there exists $x^* \in V$ with $\|x^*- q\|_p \leq r$, query $A_{base}$ with $q$ and obtain a point $x \in V$. 
Then query $A_x$ with $M^{x}(q-x)$, obtain a point $M^{x}(z-x) \in M^x(V-x)$ and output $z$ accordingly. 

\begin{figure}[!htbp]
\centering
  \includegraphics[width=15cm]{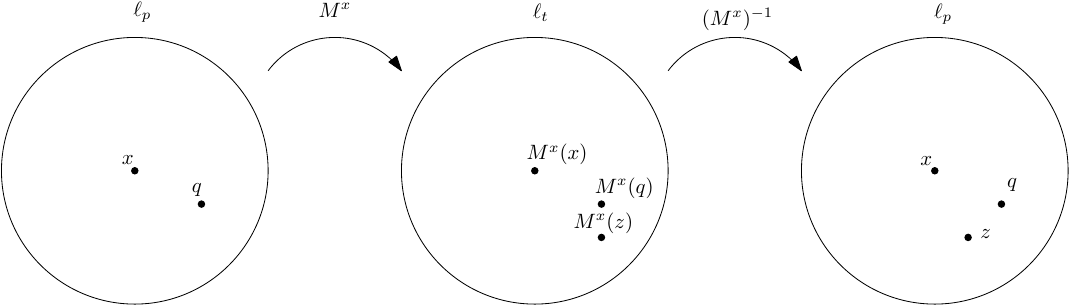}
   \caption{An illustration of \Cref{claim:ANN_Mazur_Map}. For the purpose of this illustration, the $\ell_p$ and $\ell_t$ balls are depicted using a Euclidean circle, and $x$ is assumed to lie at the origin of $\ell_p$. Given a query point $q$, an approximated solution $x$ is found in $\ell_p$ using $A_{base}$. The Mazur map $M^x$ is then applied, after which a solution $M^x(z)$ is found in $\ell_t$ using $A_x$. Finally, the inverse map is applied to obtain an improved solution $z$ in $\ell_p$.}
\end{figure}

\begin{claim}\label{claim:ANN_Mazur_Map}
    With probability $2/3$, we have $\|z-q\|_p\leq c_{new} r$, where $c_{new}=(\tfrac{p}{t})^{t/p} \ c_t^{t/p} \ (4c_p)^{1-t/p}$.
\end{claim}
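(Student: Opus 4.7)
The plan is to split the analysis into the two random events and a deterministic core. The two events are: $A_{base}$ correctly answers its query, and $A_x$ (the data structure associated with the point $x$ returned by $A_{base}$) correctly answers its query. Each has been amplified to success probability at least $5/6$, so a union bound gives that both succeed simultaneously with probability at least $2/3$. The claim then reduces to a deterministic statement: conditioned on both succeeding, the returned point $z$ satisfies $\|q - z\|_p \leq c_{new} r$.

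Assuming $A_{base}$ succeeds, it returns $x \in V$ with $\|x - q\|_p \leq c_p r$. Let $x^* \in V$ be the promised point with $\|x^* - q\|_p \leq r$. The triangle inequality yields $\|x^* - x\|_p \leq (1 + c_p) r \leq 2 c_p r$ and $\|q - x\|_p \leq c_p r \leq 2 c_p r$ (using $c_p \geq 1$), so both $q - x$ and $x^* - x$ lie in the ball $B_p(0, 2 r c_p)$ on which the scaled Mazur map underlying $M^x$ is valid, and in particular $M^x(x^*)$ is in $A_x$'s dataset. Applying the upper bound of \Cref{thm:mazur_map} with $C_0 = 2 r c_p$ gives
\[
\|M^x(q) - M^x(x^*)\|_t \;\leq\; \|q - x^*\|_p \;\leq\; r,
\]
so $M^x(q)$ does have an $r$-near neighbor in $A_x$'s dataset and thus $A_x$ is given a valid input.

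Assuming $A_x$ then also succeeds, it returns $M^x(z)$ with $\|M^x(q) - M^x(z)\|_t \leq c_t r$, and $z - x \in B_p(0, 2 r c_p)$ by the construction of $A_x$'s dataset. The lower bound of \Cref{thm:mazur_map} with the same $C_0 = 2 r c_p$ then reads
\[
\tfrac{t}{p}\,(4 r c_p)^{1 - p/t}\,\|q - z\|_p^{p/t} \;\leq\; \|M^x(q) - M^x(z)\|_t \;\leq\; c_t r ,
\]
and isolating $\|q-z\|_p$ and simplifying $(4 r c_p)^{1-t/p} r^{t/p} = (4 c_p)^{1-t/p} r$ will give exactly $\|q - z\|_p \leq (\tfrac{p}{t})^{t/p}\, c_t^{t/p}\, (4 c_p)^{1 - t/p}\, r = c_{new} r$, as required. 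The only step that requires care, rather than being an actual obstacle, is the bookkeeping of the radius $C_0 = 2 r c_p$ used to scale $M^x$: one must confirm that both the optimum $x^*$ and the returned point $z$ genuinely lie within the clipped ball $B_p(0, 2 r c_p) \cap (V - x)$, so that their images are actually stored in $A_x$ and the Mazur-map bounds apply in both directions.
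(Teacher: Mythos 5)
Your proof is correct and follows essentially the same route as the paper's: union bound over the two amplified data structures, triangle inequality to place $x^*-x$ (and $q-x$) in the ball of radius $2rc_p$, the non-expansion of the scaled Mazur map to certify that $M^x(q)$ has an $r$-near neighbor, and the lower distortion bound of \Cref{thm:mazur_map} with $C_0=2rc_p$ rearranged to give $c_{new}$. The only difference is that you make explicit the check that $q-x$, $x^*-x$ and $z-x$ all lie in the clipped ball, which the paper leaves implicit.
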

\begin{proof}
With probability at least $\frac{5}{6}$, $A_{base}$ outputs a point $x$ with $\|x-q\|_p \leq r c_p$.
By triangle inequality, $\|x^*-x\|_p\leq \|x^*-q\|_p+\|q-x\|_p\leq 2rc_p$, hence $\|M^x(x^*)-M^x(q)\|_t\leq r$.
Thus, with probability at least $\frac{5}{6}$, $A_x$ outputs a point $M^{x}(z)$ with $\|M^x(z)-M^x(q)\|_t\leq r c_t$. By a union bound, both events hold with probability $2/3$. Assume they hold.
By \Cref{thm:mazur_map}, 
\[
\frac{t}{p} \cdot {(4 rc_p)^{1-p/t}}\|z-q\|_{p}^{p/t} \leq \|M^{x}(z)-M^{x}(q)\|_t \leq r \cdot c_t,
\] 
rearranging this we obtain $\|z-q\|_p \leq r(\tfrac{p}{t})^{t/p} \ c_t^{t/p} \ (4c_p)^{1-t/p} \equiv r \cdot c_{new}$.
\end{proof}

\begin{remark}
    Plugging $t=2$ into \Cref{claim:ANN_Mazur_Map} and solving the recursion, we obtain a variation of \cite[Lemma 11]{BG19}. 
\end{remark}

Now, as in the proof of \Cref{thm:Optimal-Lipschitz-Decompositions-in-lp}, we apply the additional recursive embedding reduction that goes through intermediate $\ell_t$ spaces. 
    To improve readability, we first provide a simpler proof with $O(p^3)$-approximation, and then explain the improvement to $O(p^{1+\ln(4)+\eps})$-approximation.
    We assume without loss of generality that $p\leq\log{d}$ by H\"older's inequality.

    Assume for now that $p$ is a power of $2$. 
    Consider the data structure for $\ell_2^d$ given by \cite{Chan98}, with approximation $c=\poly(d)$, space and processing time $\tilde{O}(n\cdot \poly(d))$ and query time $\poly(d\log n)$. 
    By H\"older's inequality, the same data structure yields $\poly(d)$ approximation also for $\ell_p^d$.

    Now, we recursively apply \Cref{claim:ANN_Mazur_Map} with $t=p/2$, as follows.
    Denote by $k$ the number of recursive steps to be determined later, and by $\hat{c}_i$ the approximation guarantee in $\ell_p$ after the $i$-th recursive step.
    Initially, $\hat{c}_0 = \poly(d)$, by using the data structure of \cite{Chan98}.
    For every $i\in [k]$, we maintain data structures $\{A_x^i\}_{x\in V}$, where the Mazur map is scaled according to the current approximation guarantee (i.e., scaled down by $\frac{p}{t} \cdot (2r \hat{c}_{i-1})^{p/t-1}$).
    Moreover, we amplify the success probabilities to $1-\tfrac{2}{3k}$ by $O(\log k)$ independent repetitions.
    Thus, if the $(i-1)$-th iteration is successful, i.e., it returns a point $x$ solving $(\hat{c}_{i-1},r)$-ANN, then the Mazur maps in the $i$-th iteration are scaled correctly. Hence, by querying $A_x^i$, we get the approximation given by \Cref{claim:ANN_Mazur_Map}.
    By the law of total probability, with probability $2/3$, all the $k$ recursive steps return a correct estimate. Therefore,

    \begin{align}
        \hat{c}_k(V) &\leq  \sqrt{8c_{p/2}\cdot \hat{c}_{k-1}(V)} \nonumber \\
        &\leq  \sqrt{8c_{p/2}\cdot  \sqrt{8c_{p/2}\cdot \hat{c}_{k-2}(V)}} \nonumber \\
        &\leq \cdots \nonumber \\
        &\leq 
        (8c_{p/2})^{(1/2+1/4+\ldots+1/2^k)} \ \hat{c}_{0}(V)^{2^{-k}} \nonumber \\
        &\leq 8 c_{p/2}\cdot \hat{c}_0(V)^{2^{-k}}. 
    \end{align}
Picking
$k := \ceil{\log(\log p\log \hat{c}_0(V))} = O( \log\log \log {d})$ yields $\hat{c}_0(V)^{2^{-k}} \leq 2^{1/\log p}$, and we obtain a data structure with approximation at most $\hat{c}_k(V) \leq 2^{3+1/\log p} \cdot c_{p/2}$. 

Before applying a second recursion on $p$, we amplify the success probabilities to $1-\tfrac{2}{3\log p}$ by $O(\log \log p)=O(\log\log\log d)$ independent repetitions.
Now a second recursion on $p$ implies $\hat{c}_k(V)\leq 2 p^3 \cdot c_2$ with probability at least $2/3$. 
Finally, we bound $c_2$ similarly to~\cite{BG19}, 
namely, using the JL-lemma to reduce the dimension to $O(\log{n})$ 
together with a $(2,r)$-ANN data structure of~\cite{KOR00, HIM12} in $\ell_2^{O(\log{n})}$, 
which has query time $T_2 = \polylog{n}$, and space and preprocessing time $S_2=Z_2=n^{O(1)}$.
Plugging this as the base case of the second recursion,
and we get the desired approximation $\hat{c}_k(V)=O(p^3)$. 
Each level of the second recursion increases the space and preprocessing time by factor $n$, 
resulting in a total of $n^{O(\log p)}\cdot S_2 = n^{\log p+O(1)}\cdot d^{O(1)}$ space and preprocessing time. %
Answering a query goes through both recursions, but the first recursion only requires $O(k\log k)=\tilde{O}(\log\log\log{d})$ calls to an ANN data structure for $\ell_t$,
hence the overall running time is
$(\log\log\log{d})^{O(\log{p})} \cdot T_2=\poly(d\log n)$. 
Resolving the case when $p$ is not a power of $2$ is straightforward and performed exactly as in the proof of \Cref{thm:Optimal-Lipschitz-Decompositions-in-lp}, and thus omitted.

\medskip

To improve the approximation, let $\eps>0$, and pick $t=(1-\eps)p$ instead of $t=p/2$.
We now have that 
\[\hat{c}_k(V) \leq (\tfrac{1}{1-\eps})^{1-\eps} \ c_{t}^{1-\eps} \ (4\hat{c}_{k-1}(V))^{\eps} \leq \ldots \leq (\tfrac{c_{t}}{1-\eps})^{1-\eps^{k}} 4^{\tfrac{\eps (1-\eps^{k})}{1-\eps}}(\hat{c}_0(V))^{\eps^{k}}.\]
For sufficiently large $k=O(\log(\eps^{-1}) \log(\log p \log d))$, we get $\hat{c}_k(V)\leq \tfrac{1}{1-\eps} 4^{\tfrac{\eps}{1-\eps}} c_t$.
Now, a recursion on $p$ for $\log_{\tfrac{1}{1-\eps}}p=O(\eps^{-1}\log p)$ levels implies
\[
\hat{c}_k(V)\leq p\cdot \exp\big(\ln(4)(\tfrac{\eps}{1-\eps}\cdot \log_{\tfrac{1}{1-\eps}}p)\big) c_2
\leq p^{1+\ln(4)+O(\eps)} c_2,
\]
where the last step uses the inequalities $\tfrac{1}{1-\eps}\geq 1+\eps$ and $\ln(1+\eps)\geq \tfrac{\eps}{1+\eps}$.
The rest of the proof is the same, and the space and preprocessing time increase to $\poly(d n^{\eps^{-1}\log p})$.
Rescaling $\eps$ concludes the proof.
\end{proof}

\section{Embedding Finite $\ell_p$ Metrics into $\ell_2$}
\label{sec:Embedding-finite-l_p-metrics-into-l_2}
In this section, we prove \Cref{thm:improved-c_2-distortion}
by providing embeddings of finite $\ell_p$ metrics into $\ell_2$, for $3<p<3\sqrt{e}$. 
We will need the following setup from~\cite{NR25}.

\begin{definition}[Definition 4 in \cite{NR25}]
\label{def:localized-weakly-bi-Lipschitz-embedding} 
Given \( K, D > 1 \), we say that a metric space \( (\M, d_\M) \) admits a 
\emph{\( K \)-localized weakly bi-Lipschitz embedding} into a metric space \( (\mathcal{N}, d_\mathcal{N}) \) with distortion \( D \) if for every \( \Delta > 0 \) and every subset \( \mathcal{C} \subseteq \M \) of diameter \( \operatorname{diam}_\M(\mathcal{C}) \leq K \Delta \), there exists a non-constant Lipschitz function  
\( f_\Delta^\mathcal{C}: \mathcal{C} \to \mathcal{N} \) satisfying the following.
For every $x,y\in\mathcal{C}$, if $d_\M(x, y) > \Delta$, then
\[
  d_\mathcal{N} \left( f_\Delta^\mathcal{C}(x), f_\Delta^\mathcal{C}(y) \right) > \frac{\| f_\Delta^\mathcal{C} \|_{\text{Lip}}}{D} \Delta,
\]
where \( \| \cdot \|_{\text{Lip}} \) is the Lipschitz constant.
\end{definition}

We provide the following simple observation, that composing a localized weakly bi-Lipschitz embedding with a low-distortion embedding yields a localized weakly bi-Lipschitz embedding, as follows.
\begin{observation}\label{obs:localized-bi-Lipschitz-composed-with-bi-Lipschitz}
Let \( (\M, d_\M),(\mathcal{N}, d_\mathcal{N}),(\mathcal{Z}, d_\mathcal{Z}) \) be metric spaces, such that $(\M, d_\M)$ admits a $K$-localized weakly bi-Lipschitz embedding into \( (\mathcal{N}, d_\mathcal{N}) \) with distortion \( D_1 \) and $(\mathcal{N}, d_\mathcal{N})$ admits an embedding into $(\mathcal{Z}, d_\mathcal{Z})$ with distortion $D_2$. Then $(\M, d_\M)$ admits a $K$-localized weakly bi-Lipschitz embedding into \( (\mathcal{Z}, d_\mathcal{Z}) \) with distortion \( D_1 \cdot D_2 \).       
\end{observation}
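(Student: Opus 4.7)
The plan is to take the natural composition of the two hypothesized maps and verify the three requirements of Definition~\ref{def:localized-weakly-bi-Lipschitz-embedding}: being Lipschitz, being non-constant, and satisfying the weak lower bound on pairs at $d_\M$-distance greater than $\Delta$.

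Fix $\Delta > 0$ and any subset $\mathcal{C} \subseteq \M$ with $\diam_\M(\mathcal{C}) \leq K \Delta$. The first hypothesis supplies a non-constant Lipschitz map $f := f_\Delta^\mathcal{C} : \mathcal{C} \to \mathcal{N}$ satisfying the localized weak bi-Lipschitz property with distortion $D_1$. Restricting the second embedding to the subset $f(\mathcal{C}) \subseteq \mathcal{N}$ yields a map $h : f(\mathcal{C}) \to \mathcal{Z}$ of distortion $D_2$; by the definition of distortion in the footnote following Corollary~\ref{cor:Optimal-l_p_d_Lipschitz-Decomposition}, there is a scale $s > 0$ with
\[
\tfrac{s}{D_2}\, d_\mathcal{N}(a, b) \;\leq\; d_\mathcal{Z}(h(a), h(b)) \;\leq\; s\, d_\mathcal{N}(a, b)
\qquad \text{for all } a, b \in f(\mathcal{C}).
\]
The only natural candidate for the composed map is $g := h \circ f : \mathcal{C} \to \mathcal{Z}$.

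Next I would verify the three properties in turn. The upper distortion bound on $h$ gives $\|g\|_{\mathrm{Lip}} \leq s \cdot \|f\|_{\mathrm{Lip}}$, so $g$ is Lipschitz; non-constancy of $g$ follows from that of $f$ together with injectivity of $h$, which is a consequence of its positive lower distortion. For the crucial weak lower bound, let $x, y \in \mathcal{C}$ with $d_\M(x, y) > \Delta$. Applying Definition~\ref{def:localized-weakly-bi-Lipschitz-embedding} to $f$ gives $d_\mathcal{N}(f(x), f(y)) > \frac{\|f\|_{\mathrm{Lip}}}{D_1}\Delta$, and the lower bound on $h$ then yields
\[
d_\mathcal{Z}(g(x), g(y)) \;\geq\; \tfrac{s}{D_2}\, d_\mathcal{N}(f(x), f(y)) \;>\; \tfrac{s\, \|f\|_{\mathrm{Lip}}}{D_1 D_2}\,\Delta \;\geq\; \tfrac{\|g\|_{\mathrm{Lip}}}{D_1 D_2}\,\Delta,
\]
which is exactly the required estimate with combined distortion $D_1 D_2$.

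There is essentially no obstacle here, since the proof is a direct unwinding of the two definitions on the composition $g = h \circ f$. The two points worth being careful about are that the strict inequality in Definition~\ref{def:localized-weakly-bi-Lipschitz-embedding} is preserved under multiplication by the positive constant $s/D_2$, and that the arbitrary scale factor $s$ cancels in the final ratio, so the combined distortion $D_1 D_2$ depends only on the two given distortions and is independent of the choice of $s$.
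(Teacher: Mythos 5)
Your proposal is correct and follows essentially the same argument as the paper: compose $f_\Delta^\mathcal{C}$ with the distortion-$D_2$ embedding, use the scale $s$ from the definition of distortion, and observe that $s$ cancels against the bound $\|g\|_{\mathrm{Lip}} \leq s\,\|f\|_{\mathrm{Lip}}$ in the final ratio. The only cosmetic difference is that you justify non-constancy via injectivity of the second map, whereas the paper cites its bounded contraction; these are the same point.
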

\begin{proof}
Let $\Delta>0$ and \( \mathcal{C} \subseteq \M \) of diameter \( \operatorname{diam}_\M(\mathcal{C}) \leq K \Delta \). Let \( f_\Delta^\mathcal{C}: \mathcal{C} \to \mathcal{N} \) be the function promised by \Cref{def:localized-weakly-bi-Lipschitz-embedding}, and $g:(\mathcal{N}, d_\mathcal{N}) \to (\mathcal{Z}, d_\mathcal{Z})$ be an embedding with distortion $D_2$. Consider $\Tilde{f}_\Delta^\mathcal{C} := g \circ f_\Delta^\mathcal{C}$.
Recall that since $g$ has distortion at most $D_2$, there exists $s>0$ such that for every $u,v\in \N$, we have
$\frac{s}{D_2} \cdot d_\N(u,v) \leq d_\mathcal{Z}(g(u), g(v)) \leq s \cdot d_\N(u, v)$.
Since $f_\Delta^\mathcal{C}$ is non-constant and $g$ has bounded contraction, $\Tilde{f}_\Delta^\mathcal{C}$ is non-constant. Let $x,y\in\mathcal{C}$ such that $d_\M(x, y) > \Delta$.  Hence,
\[
d_\mathcal{Z} \left( \Tilde{f}_\Delta^\mathcal{C}(x), \Tilde{f}_\Delta^\mathcal{C}(y) \right) \geq \frac{s}{D_2} \cdot d_\mathcal{N} \left( f_\Delta^\mathcal{C}(x), f_\Delta^\mathcal{C}(y) \right) > \frac{s \cdot \| f_\Delta^\mathcal{C} \|_{\text{Lip}}}{D_1 \cdot D_2} \Delta,
\]
where the last inequality follows since $f_\Delta^\mathcal{C}$ is a $K$-localized weakly bi-Lipschitz embedding with distortion $D_1$. Since $g$ expands distances by at most a factor $s$, we have $\| \Tilde{f}_\Delta^\mathcal{C} \|_{\text{Lip}} \leq s \cdot\| f_\Delta^\mathcal{C} \|_{\text{Lip}}$, concluding the proof.
\end{proof}

\begin{lemma}[Generalization of Lemma 5 in \cite{NR25}]
\label{lem:localized-properties-Mazur-map}
For every \( K > 1 \), if \( p > q \geq  1 \), then \( \ell_p \) admits a \( K \)-localized weakly bi-Lipschitz embedding into \( \ell_q \)  
with distortion
$
O_{p/q} (K^{p/q-1}).
$
\end{lemma}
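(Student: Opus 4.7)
The plan is to apply the scaled Mazur map from \Cref{thm:mazur_map} directly, after an appropriate translation. Given $\Delta > 0$ and $\mathcal{C} \subseteq \ell_p$ with $\operatorname{diam}_{\ell_p}(\mathcal{C}) \leq K\Delta$, I would first fix an arbitrary base point $x_0 \in \mathcal{C}$ and consider the translated set $\mathcal{C} - x_0$. Every point of $\mathcal{C} - x_0$ has $\ell_p$-norm at most $K\Delta$, so the translated set lies inside the ball $B_p(0, C_0)$ for $C_0 := K\Delta$, which is precisely the regime in which \Cref{thm:mazur_map} provides quantitative bi-Lipschitz bounds for the Mazur map.

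Next, I would define $f_\Delta^\mathcal{C}(x) := M(x - x_0)$, where $M$ is the Mazur map $M_{p,q}$ scaled down by $\frac{p}{q} C_0^{p/q-1} = \frac{p}{q}(K\Delta)^{p/q-1}$. The upper bound in \Cref{thm:mazur_map} immediately yields $\|f_\Delta^\mathcal{C}\|_{\text{Lip}} \leq 1$, and non-constancy follows from the injectivity of the Mazur map whenever $|\mathcal{C}| \geq 2$ (which we may assume, as the statement is trivial otherwise).

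The key computation is the localized lower bound. For $x, y \in \mathcal{C}$ with $\|x-y\|_p > \Delta$, \Cref{thm:mazur_map} gives
\[
\|f_\Delta^\mathcal{C}(x) - f_\Delta^\mathcal{C}(y)\|_q \;\geq\; \tfrac{q}{p}(2K\Delta)^{1-p/q}\|x-y\|_p^{p/q}.
\]
Since $1 - p/q < 0$, substituting the strict inequality $\|x-y\|_p^{p/q} > \Delta^{p/q}$ produces a lower bound of $\frac{q}{p}(2K)^{1-p/q}\Delta$. Combining this with $\|f_\Delta^\mathcal{C}\|_{\text{Lip}} \leq 1$ gives the required inequality with distortion $D = \frac{p}{q}\,2^{p/q-1}\,K^{p/q-1} = O_{p/q}(K^{p/q-1})$.

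I do not expect a substantive obstacle here; the lemma is essentially a bookkeeping exercise on top of \Cref{thm:mazur_map}. The only design choice that matters is scaling with $C_0 = K\Delta$ (rather than $C_0 = \Delta$) so that the Mazur map's bi-Lipschitz range actually covers the translated subset $\mathcal{C} - x_0$. The price of enlarging $C_0$ is exactly a degradation of the lower-bound constant by a factor of $K^{1-p/q}$, which is what promotes the distortion of the resulting localized weakly bi-Lipschitz embedding from $O_{p/q}(1)$ (the case $K = O(1)$ treated in \cite{NR25}) to the advertised $O_{p/q}(K^{p/q-1})$.
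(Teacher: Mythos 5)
Your proposal is correct and matches the paper's proof essentially verbatim: the paper likewise translates $\mathcal{C}$ by an arbitrary base point, applies the Mazur map scaled for $C_0 = K\Delta$, and reads the distortion $O_{p/q}(K^{p/q-1})$ directly off \Cref{thm:mazur_map}. Your write-up just makes explicit the bookkeeping (Lipschitz constant $\leq 1$, non-constancy, and the $\frac{p}{q}(2K)^{p/q-1}$ bound) that the paper leaves implicit.
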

\begin{proof}
Fixing $K,\Delta$>0 and a subset $\mathcal{C} \subset \ell_p$ whose $\ell_p$ diameter is at most $K\Delta$, 
pick an arbitrary point $z \in \mathcal{C}$, and consider the Mazur map $M_{p,q}$ scaled down by $(K\Delta)^{p/q-1}$ on $\C-z$.
The lemma follows immediately by \Cref{thm:mazur_map}.
\end{proof}

\begin{definition}
\label{def:extension-modulus}
The Lipschitz extension modulus \( e(\M,\mathcal{N}) \) of a pair of metric spaces \( \M, \mathcal{N} \) is the infimum over all \( L \in [1, \infty) \) such that for every subset \( \mathcal{C} \subseteq \M \), every \( 1 \)-Lipschitz function \( f: \mathcal{C} \to \mathcal{N} \) can be extended to an \( L \)-Lipschitz function \( F: \M \to \mathcal{N} \).
\end{definition}

\begin{theorem}[Theorem 6 in \cite{NR25}] \label{thm:localized-bi-Lipschitz-imply-bounded-c_2-distortion}
There is a universal constant \( \kappa > 1 \) with the following property. Fix \( \theta > 0 \), an integer \( n \geq 3 \), and \( \alpha > 1 \). 
Let \( (\M, d_\M) \) be an \( n \)-point metric space such that every subset \( \mathcal{C} \subseteq \M \) with \( |C| \geq 3 \) admits a  
\( \kappa (\log |\mathcal{C}|) \)-localized weakly bi-Lipschitz embedding into \( \ell_2 \) with distortion \( \alpha (\log |\mathcal{C}|)^\theta \). Then  

\[
c_2(\M) \leq \alpha \cdot e(\M; \ell_2)\cdot (\log n)^{\max \left\{ \theta, \frac{1}{2} \right\} }\cdot \log \log n.
\]    
\end{theorem}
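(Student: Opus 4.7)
The plan is to construct an embedding $F\colon\M\to\ell_2$ as a direct sum $F=\bigoplus_{i=1}^{T} F_i$ over $T=O(\log n)$ dyadic scales $\Delta_i=2^i$ spanning the pairwise distances of $\M$, and then bound its distortion by a Bourgain-style per-pair argument. For each scale $i$, I would draw a random Lipschitz decomposition of $\M$ with diameter parameter $\kappa(\log n)\Delta_i$, so that each cluster $\mathcal{C}$ falls within the localization radius $\kappa\log|\mathcal{C}|\cdot\Delta_i$ required by the hypothesis, which then furnishes a non-constant $\kappa\log|\mathcal{C}|$-localized weakly bi-Lipschitz embedding $f_i^{\mathcal{C}}\colon\mathcal{C}\to\ell_2$ of distortion at most $\alpha(\log n)^{\theta}$. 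After normalising $\|f_i^{\mathcal{C}}\|_{\text{Lip}}=1$, I would invoke the definition of $e(\M;\ell_2)$ to extend $f_i^{\mathcal{C}}$ to a global $e(\M;\ell_2)$-Lipschitz function $\tilde f_i^{\mathcal{C}}\colon\M\to\ell_2$, and assemble the scale-$i$ block $F_i$ by placing the extensions of distinct clusters into orthogonal coordinate subspaces of $\ell_2$, modulated by Lipschitz bumps supported near each cluster.

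For the lower bound on $\|F(x)-F(y)\|_2$, fix a pair $x,y\in\M$ and let $i^*$ be the unique scale with $\Delta_{i^*}<d_\M(x,y)\le 2\Delta_{i^*}$. With probability $\Omega(1)$ the decomposition at scale $i^*$ keeps $x$ and $y$ inside a common cluster $\mathcal{C}$, and \Cref{def:localized-weakly-bi-Lipschitz-embedding} then yields $\|F_{i^*}(x)-F_{i^*}(y)\|_2\ge c\cdot d_\M(x,y)/(\alpha(\log n)^{\theta})$ for an absolute constant $c>0$. Running $O(\log\log n)$ independent copies of the whole construction at each scale boosts this good event to probability $\ge 1-n^{-3}$, so that a union bound over all $\binom{n}{2}$ pairs secures the lower bound simultaneously. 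For the upper bound, orthogonality of the blocks gives $\|F(x)-F(y)\|_2^2=\sum_i\|F_i(x)-F_i(y)\|_2^2$; each block is both $O(e(\M;\ell_2))$-Lipschitz and has image diameter $O(e(\M;\ell_2)\cdot\Delta_i\log n)$ per cluster, so only scales $i$ within $O(1)$ of $i^*$ contribute via the Lipschitz bound while distant scales contribute via the bounded-image bound, and a geometric-series accounting yields $\|F\|_{\text{Lip}}\le O(e(\M;\ell_2)\sqrt{T})=O(e(\M;\ell_2)\sqrt{\log n})$.

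Combining expansion and contraction gives distortion $O(\alpha\, e(\M;\ell_2)\,(\log n)^{\theta+1/2}\log\log n)$, which already proves the theorem in the regime $\theta\le 1/2$: there one may alternatively collapse to the plain Bourgain Fréchet embedding (ignoring the localized embeddings entirely yields $O(\sqrt{\log n})$ distortion with the Lipschitz-extension overhead), and the resulting $(\log n)^{1/2}$ factor dominates $(\log n)^{\theta}$. For $\theta\ge 1/2$, one would instead use only the matching scale $i^*$ for each pair and combine scales with $O(1)$ Lipschitz overhead via a peeling argument, recovering $O(\alpha\, e(\M;\ell_2)\,(\log n)^{\theta}\log\log n)$. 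The two regimes together produce the advertised exponent $\max\{\theta,1/2\}$.

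The main obstacle will be the tight per-pair accounting that delivers $\max\{\theta,1/2\}$ rather than the naive product $\theta+1/2$: one must simultaneously cap the upper-bound contribution via cluster-diameter bounds and localise the lower bound to a single effective scale, while handling the $O(\log\log n)$ amplification uniformly over all $\binom{n}{2}$ pairs. A secondary subtlety is the construction of the cluster-separating Lipschitz bumps used to glue per-cluster extensions across $\M$, which must not inflate the overall Lipschitz constant by more than an $O(1)$ factor; a straightforward distance-based bump of Lipschitz constant $O(1/\Delta_i)$ that equals $1$ on each cluster suffices.
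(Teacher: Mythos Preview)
The paper does not prove this statement at all: it is quoted verbatim as Theorem~6 of \cite{NR25} and used as a black box in \Cref{lem:recursive-l_2-embedding}. So there is no ``paper's own proof'' to compare against, and your task here was really to reproduce (or replace) the argument of \cite{NR25}.

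On the substance of your sketch: the direct-sum-over-scales construction with per-cluster Mazur-type embeddings, Lipschitz extension, and orthogonal gluing is the natural starting point, and your accounting correctly arrives at distortion $O\big(\alpha\,e(\M;\ell_2)\,(\log n)^{\theta+1/2}\log\log n\big)$. The entire difficulty of the theorem is the improvement of the exponent from $\theta+\tfrac12$ to $\max\{\theta,\tfrac12\}$, and this is precisely where your proposal breaks down. For $\theta\le\tfrac12$ you assert that ``the plain Bourgain Fr\'echet embedding \dots yields $O(\sqrt{\log n})$ distortion''; this is false for a general $n$-point metric $\M$, where Bourgain only gives $O(\log n)$, and the hypothesis of the theorem gives you no additional structural information about $\M$ beyond the localized embeddings you are proposing to ignore. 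For $\theta\ge\tfrac12$ you invoke an unspecified ``peeling argument'' that supposedly combines $O(\log n)$ scales with only $O(1)$ Lipschitz overhead; no such argument is supplied, and this is exactly the nontrivial content one needs. A secondary issue is that the number $T$ of dyadic scales spanning the pairwise distances of an arbitrary $n$-point metric is the logarithm of the aspect ratio, not $O(\log n)$; reducing to $O(\log n)$ effective scales already requires a measured-descent or scale-selection mechanism that you have not described.

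In short, your outline recovers only the naive exponent $\theta+\tfrac12$; the passage to $\max\{\theta,\tfrac12\}$, which is the point of the theorem, is not established.
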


Next, we show a reduction that takes embeddings of finite $\ell_q$ metrics into $\ell_2$, and constructs an embedding of finite $\ell_p$ metric into $\ell_2$, for $p>q$.
The proof constructs a localized weakly bi-Lipschitz embedding of $\ell_p$ into $\ell_q$ and composes it with the given embedding from $\ell_q$ into $\ell_2$.
By \Cref{obs:localized-bi-Lipschitz-composed-with-bi-Lipschitz}, this yields a localized weakly bi-Lipschitz embedding from $\ell_p$ into $\ell_2$, and by \Cref{thm:localized-bi-Lipschitz-imply-bounded-c_2-distortion}, we get a low-distortion embedding into $\ell_2$.

For every $q \in [1,\infty]$, define 
\[
    \xi_q \coloneqq \inf_{\theta \geq 0}  \Big\{\theta : \exists \nu>0, \forall n \geq 2, \quad c_2^n(\ell_q) \leq \nu \cdot \log^\theta{n} \Big\},
\]
where $\xi_q \leq 1$ for all $q \in [1,\infty]$ by Bourgain's embedding \cite{Bourgain85}.

\begin{lemma}
\label{lem:recursive-l_2-embedding}
For every $2 \leq q<p$,
\[
    \xi_{p} \leq \max\{\tfrac{1}{2}, \xi_q\}+\tfrac{p}{q}-1.
\]
\end{lemma}
\begin{proof}
Let $\delta>0$ and let $\mathcal{\M} \subset \ell_p$ be an $n$-point metric.
If $n \leq 2$, then clearly $c_2^n(\ell_p)=1$. Otherwise, let $\mathcal{C} \subseteq \M$ with $|\mathcal{C}| \geq 3$.
We now construct a weakly bi-Lipschitz embedding of $\C$ into $\ell_2$. By \Cref{lem:localized-properties-Mazur-map,obs:localized-bi-Lipschitz-composed-with-bi-Lipschitz}, we have that for every $K\geq 1$, $\C$ admits a $K$-localized weakly bi-Lipschitz embedding into $\ell_2$ with distortion $O(K^{p/q-1} \cdot c_2^{|\mathcal{C}|}(\ell_q))$. 
Setting $K=\kappa(\log{|\mathcal{C}|})$, where $\kappa$ is the universal constant from \cref{thm:localized-bi-Lipschitz-imply-bounded-c_2-distortion}, and using $c_2^{|\mathcal{C}|}(\ell_q) \leq O_\delta( \log^{\xi_q+\delta}{|\mathcal{C}|})$, we obtain a $\kappa(\log|\mathcal{C}|)$-localized weakly bi-Lipschitz embedding of $\mathcal{C}$ into $\ell_2$ with distortion $O_{p,\delta}(\log^{\frac{p}{q}-1+\xi_q+\delta}{|\mathcal{C}|})$.

By \Cref{thm:localized-bi-Lipschitz-imply-bounded-c_2-distortion},
\begin{align*}
    c_2(\ell_p) &\leq O_{p,\delta}\Big( e(\ell_p; \ell_2) (\log n)^{\max\{\frac{1}{2}, \frac{p}{q}-1+\xi_q+\delta\}} \log \log n\Big) \\
    &\leq O_{p,\delta}\Big( (\log n)^{\max\{\frac{1}{2}, \frac{p}{q}-1+\xi_q+\delta\}} \log \log n\Big) && \text{$e(\ell_p,\ell_2) \leq O(\sqrt{p})$ by \cite{NPSS06}} \\
    &\leq O_{p,\delta}\Big( (\log n)^{\max\{\frac{1}{2}, \xi_q\}+\frac{p}{q}-1+\delta}\log\log n \Big) && \text{since $\frac{p}{q}-1+\delta > 0$} \\
    &\leq O_{p,\delta}\Big((\log n)^{\max\{\frac{1}{2}, \xi_q\}+\frac{p}{q}-1+2\delta}\Big).
\end{align*}
Since $\delta$ is arbitrary, the lemma follows.

\end{proof}

The reduction given in the lemma above is a single iteration of recursive embedding, and we repeat it recursively to prove \Cref{thm:improved-c_2-distortion}.

\begin{proof}[Proof of \Cref{thm:improved-c_2-distortion}]
    Let $3<p<3\sqrt{e}$ and $\eps>0$. 
    Consider a sequence $q_0,\ldots,q_k$, where $q_0=p$ and $\tfrac{q_i}{q_{i+1}}=(\tfrac{p}{3})^{1/k}$ for all $i\in [0,k-1]$. Therefore, $q_k=3$.
    By \cref{lem:recursive-l_2-embedding} we have,
    \begin{align*}
        \xi_p & \leq \max\{\tfrac{1}{2}, \xi_{q_1}\}+\tfrac{p}{q_1}-1 \\
        & \leq \max\{\tfrac{1}{2}, \xi_{q_2}\}+\tfrac{p}{q_1}-1+\tfrac{q_1}{q_2}-1\\
        &\ldots \\
        &\leq \max\{\tfrac{1}{2}, \xi_{3}\}+(\tfrac{p}{q_1}-1 + \tfrac{q_1}{q_2}-1+\ldots +\tfrac{q_{k-1}}{q_k}-1).
    \intertext{By~\cite[Theorem 1]{NR25}, we have $c_2^n(\ell_3) \leq O(\sqrt{\log{n}}\cdot\log\log{n})$, and thus $\xi_3 \leq \frac{1}{2}$. Therefore,}
    &= \tfrac{1}{2}-k+\sum_{i=0}^{k-1} \tfrac{q_i}{q_{i+1}} \\
    &=\tfrac{1}{2}-k + k (\tfrac{p}{3})^{1/k}
    =\tfrac{1}{2}-k + k\cdot \exp(\tfrac{1}{k}\ln \tfrac{p}{3}).
    \intertext{For a suitable choice of $k=O(\eps^{-1})$, and using the useful inequality $e^x\leq 1+x+x^2$ for $x<1.79$,}
    &\leq \tfrac{1}{2}-k + k \big(1+ \tfrac{1}{k}\ln \tfrac{p}{3} + (\tfrac{1}{k}\ln \tfrac{p}{3})^2\big) \\
    &< \tfrac{1}{2} + \ln \tfrac{p}{3} + \eps.
    \end{align*}
    The theorem follows from the definition of $\xi_p$.
\end{proof}

\section{Future Directions}
\label{sec:conclusion}

\subparagraph*{Problems in $\ell_p$, $p < 2$. }  
Our results for \( \ell_p \) spaces are all for \( p > 2 \).
For the other case, \( p < 2 \),
there are natural candidates for intermediate spaces,
namely, \( \ell_q \) for \( p < q < 2 \).
Can recursive embedding be used in such settings?

\subparagraph*{Problems in $\ell_\infty$.  }
Many problems in $\ell_\infty^d$ can be reduced to $\ell_2^d$
using John's theorem~\cite{John48},
which incurs $O(\sqrt{d})$ multiplicative distortion and is known to be tight.
Our method bypasses this limitation
and reduces the Lipschitz decomposition problem from $\ell_\infty^d$ to $\ell_2^d$
at the cost of only a polylogarithmic (in $d$) factor.
Indeed, the reduction in \Cref{thm:Optimal-Lipschitz-Decompositions-in-lp}
actually proves (although not stated explicitly) that
\begin{equation}
  \label{eq:reduce_infty_to_2}
  \beta^*(\ell_\infty^d) \leq \polylog(d) \cdot \beta^*(\ell_2^d).
\end{equation}
Can other problems in \( \ell_\infty^d \) be resolved similarly,
i.e., through a recursive embedding to \( \ell_2^d \)
that bypasses the $O(\sqrt{d})$ factor of a direct embedding?

\subparagraph*{Lower Bounds.}  
Our approach of reducing from $\ell_\infty^d$ to $\ell_2^d$
can also establish lower bounds for problems in $\ell_2^d$,
which essentially amounts to ``pulling'' hard instances,
from \( \ell_\infty^d \) into \( \ell_2^d \).
For $\beta^*(\ell_2^d)$, a tight bound is already known~\cite{CCGGP98},
and thus~\eqref{eq:reduce_infty_to_2} cannot yield a new lower bound for it.
However, for the extension modulus of $\ell_2^d$,
the known bounds are not tight,
namely, $\Omega(d^{1/4}) \leq e(\ell_2^d) \leq O(\sqrt{d})$~\cite{LN05, MN13},
and it is conjectured that $e(\ell_2^d)=\Theta(\sqrt{d})$~\cite{Naor17}.
Can the known lower bound $e(\ell_\infty^d) \geq \Omega(\sqrt{d})$ 
be pulled to $\ell_2^d$, analogously to~\eqref{eq:reduce_infty_to_2}?

\subparagraph*{Nearest Neighbor Search.}  
The space and preprocessing time of our data structure in \cref{thm:BNN-for-l_p}
are not polynomial in $n$ and $d$ whenever \( p \) is non-constant. 
This increase in preprocessing time and space was somewhat mitigated
in~\cite{BG19} in the special case of doubling metrics.
Can this issue be avoided also in the general case?

\subparagraph*{Low-Distortion Embeddings.}  
There remains a gap in our understanding of the distortion required to embed
finite $\ell_p$ metrics into $\ell_2$ for every \( p \in (2, \infty) \).
For the special case of doubling metrics,
we know from~\cite[Theorem 5.5]{BG14_v2} that
$
  c_2(\mathcal{C})
  \leq
  O\left(\sqrt{\ddim(\mathcal{C})^{p/2 - 1} \log {n}}\right)
$
for every \( p \in (2,\infty) \)
and every \( n \)-point metric \( \mathcal{C} \subset \ell_p \),
where $\ddim(\mathcal{C})$ denotes its doubling dimension.
This upper bound above does not match
the $\Omega(\log^{1/2-1/p}{n})$ lower bound in \Cref{rem:embedding-lower-bound},
which actually holds for doubling metrics.  
We thus ask whether the distortion bound in the doubling case be improved.

{\small
  \bibliographystyle{alphaurl}
  \bibliography{references}
} %

\end{document}